\documentclass{sig-alternate}
\usepackage{float}
\usepackage{algorithm}
\usepackage{algorithmic}
\usepackage{graphicx}
\usepackage{cancel}
\usepackage[scriptsize,tight]{subfigure}
\usepackage{amsmath}
\usepackage{amssymb}
\usepackage{amsfonts}
\usepackage{cite}
\usepackage{url}
\usepackage{alltt}
\usepackage{colortbl}

\makeatletter
    \newcommand\figcaption{\def\@captype{figure}\caption}
    \newcommand\tabcaption{\def\@captype{table}\caption}
\makeatother

\newtheorem{definition}{Definition}[section]

\newtheorem{example}{Example}
\newtheorem{theorem}{Theorem}[section]

\newtheorem{corollary}{Corollary}[section]

\DeclareMathOperator*{\argmin}{arg}
\def\sharedaffiliation{%
\end{tabular}\newline \vspace{-6mm}\begin{tabular}{c}}
\sloppy
\begin{document}
\title{Towards Utility-driven Anonymization of Transactions}

\numberofauthors{3}
\author{
\alignauthor Grigorios Loukides\\
\alignauthor Aris Gkoulalas-Divanis\\
\alignauthor Bradley Malin\\
\sharedaffiliation
\affaddr{\email {\{grigorios.loukides, aris.gkoulalas, b.malin\}@vanderbilt.edu}}\\
\affaddr{Department of Biomedical Informatics }  \\
\affaddr{Vanderbilt University }   \\
\affaddr{Nashville, TN, USA }\\
}

\toappear{}
\maketitle

\begin{abstract}
Publishing person-specific transactions in an anonymous form is increasingly required by organizations. Recent approaches ensure that potentially identifying information (e.g., a set of diagnosis codes) cannot be used to link published transactions to persons' identities, but all are limited in application because they incorporate coarse privacy requirements (e.g., protecting a certain set of $m$ diagnosis codes requires protecting all $m$-sized sets), do not integrate utility requirements, and tend to explore a small portion of the solution space. In this paper, we propose a more general framework for anonymizing transactional data under specific privacy and utility requirements. We model such requirements as constraints, investigate how these constraints can be specified, and propose COAT (COnstraint-based Anonymization of Transactions), an algorithm that anonymizes transactions using a flexible hierarchy-free generalization scheme to meet the specified constraints. Experiments with benchmark datasets verify that COAT significantly outperforms the current state-of-the-art algorithm in terms of data utility, while being comparable in terms of efficiency. The effectiveness of our approach is also demonstrated in a real-world scenario, which requires disseminating a private, patient-specific transactional dataset in a way that preserves both privacy and utility in intended studies.
\end{abstract}

\section{Introduction}\label{introduction}

Organizations in various domains, ranging from healthcare to government, increasingly share person-specific data, devoid of explicit identifiers (e.g., names), to enable research and comply with regulations. For instance, the U.S. National Institutes of Health (NIH) recently mandated that NIH-sponsored investigators disclose data collected or studied in a manner that is ``free of identifiers that could lead to deductive disclosure of the identity of individual subjects" \cite{10111}. Numerous studies demonstrate that \emph{de-identification} (i.e., the removal of explicit identifiers) is insufficient for privacy protection of transactional data (i.e., data in which a set of \emph{items} correspond to an individual) \cite{145} \citen{142} \cite{LoukidesAMIA}. This is because published transactions can disclose the identity of an individual associated with a transaction, if an attacker knows some of the items this individual is associated with. Imagine for example that \emph{Alice} was diagnosed with the diseases contained in the first transaction of Fig. \ref{original_data} and told her neighbor \emph{Bob} that she suffers from $a,b$ and $c$, which are relatively common. Publishing a de-identified version of the data of Fig. \ref{original_data} allows \emph{Bob} to find out that the first transaction corresponds to \emph{Alice}, since there is only one transaction containing $a, b$ and $c$ in this dataset. This problem, referred to as \emph{identity disclosure}, must be addressed to comply with regulations and to protect individuals' privacy. Having identified \emph{Alice}'s transaction, for example, \emph{Bob} can infer that \emph{Alice} also suffers from the diseases $d,e,f,g$ and $h$.

\subsection{Motivation}\label{motivation_intro}

To prevent identity disclosure, portions of transactions that are potentially linkable to identifying information need to be explicitly specified and protected prior to data release. This involves formulating a set of \emph{privacy constraints}, which are satisfied by transforming data so that each individual is linked to a sufficiently large number of transactions with respect to these constraints. This process achieves privacy because an attacker must distinguish an individual's real transaction among the transformed ones to identify him/her. However, data transformation may harm data utility when usability requirements are unaccounted for, resulting in released data that is subpar for intended applications. Thus, it is essential to balance privacy constraints with \emph{utility constraints} to ensure meaningful analysis. To show the importance of both types of constraints consider Example \ref{example2}.

\begin{figure}[!ht]
\begin{center}
\begin{minipage}[u]{1in}
\subfigcapskip=7pt
\parbox{1.4in}{
\subfigure[Original dataset]
{\small
\begin{tabular}{|r|l|}
\hline Patient & \emph{Diagnosis Codes} \tabularnewline
\hline \hline \emph{Alice} & a, b, c, d, e, f, g, h \tabularnewline \hline
 \emph{Mary} & a, c, e, f, g \tabularnewline \hline
 \emph{George} & c, d, e, f, h \tabularnewline\hline
 \emph{Jack} & a, c, e, f \tabularnewline \hline
 \emph{Anne} & e, f, g, h \tabularnewline\hline
 \emph{Tom} & d, e, f, g \tabularnewline\hline
 \emph{Jim} & a, b, d, e \tabularnewline\hline
 \emph{Steve} & a, c, f \tabularnewline\hline
 \emph{David} & a, c \tabularnewline\hline
 \emph{Ellen} & b, h \tabularnewline\hline
\end{tabular}\label{original_data}
}}
\end{minipage}\hspace{+22mm}
\begin{minipage}[u]{0.88in}
\subfigcapskip=7pt
\begin{minipage}[u]{1.4in}
\vspace{+6mm}
\subfigcapskip=7pt
\subfigcapmargin=-10pt
\subfigure[Privacy Constraints]
{\small
\begin{tabular}{|c|}
\hline
  $p_1$=\{a,b,c\} \tabularnewline\hline
  $p_2$=\{d,e,f,g,h\} \tabularnewline \hline
\end{tabular}\label{original_data_pc}
}
\end{minipage}
\hspace{+1mm}
\begin{minipage}[u]{1.4in}
\vspace{+5.2mm}
\subfigcapskip=7pt
\subfigcapmargin=-10pt
\subfigure[Utility Constraints]
{\small
\begin{tabular}{|c|}
\hline
  $u_1$=\{a,b\} \tabularnewline\hline
  $u_2$=\{c\}\tabularnewline\hline
  $u_3$=\{d\}\tabularnewline\hline
  $u_4$=\{e,f,g,h\} \tabularnewline\hline
\end{tabular}\label{original_data_uc}
}\end{minipage}
\end{minipage}

\vspace{-3mm}
\caption{Example dataset and constraints.}\label{original_data_all}
\end{center}
\end{figure}\vspace{-6mm}

\subfigcapskip=7pt
\begin{figure*}
\begin{center}
\subfigcapskip=7pt
\begin{minipage}[b]{1.55in}
\subfigure[Apriori ($5^5$-anonymity) \cite{140}]
{\small\centering
\begin{tabular}{>{\columncolor[rgb]{0.8,0.8,0.75}}r||l|}
\cline{2-2} {\bf Patient} & \emph{Diagnosis Codes} \tabularnewline
\cline{2-2} \emph{Alice} & $(a,b,c)$, $(d,e,f,g,h)$ \tabularnewline \cline{2-2}
 \emph{Mary} & $(a,b,c)$, $(d,e,f,g,h)$ \tabularnewline \cline{2-2}
 \emph{George} & $(a,b,c)$, $(d,e,f,g,h)$ \tabularnewline\cline{2-2}
 \emph{Jack} & $(a,b,c)$, $(d,e,f,g,h)$ \tabularnewline \cline{2-2}
 \emph{Anne} & $(d,e,f,g,h)$ \tabularnewline\cline{2-2}
 \emph{Tom} & $(d,e,f,g,h)$ \tabularnewline\cline{2-2}
 \emph{Jim} & $(a,b,c)$, $(d,e,f,g,h)$ \tabularnewline\cline{2-2}
 \emph{Steve} & $(a,b,c)$, $(d,e,f,g,h)$ \tabularnewline\cline{2-2}
 \emph{David} & $(a,b,c)$ \tabularnewline\cline{2-2}
 \emph{Ellen} & $(a,b,c)$, $(d,e,f,g,h)$ \tabularnewline\cline{2-2}
\end{tabular}\label{anon_a}
}
\end{minipage}\hspace{+13mm}
\begin{minipage}[b]{1.6in}
\subfigcapmargin=14pt
\subfigure[Greedy Algorithm \mbox{($(1,5,5)$-coherence) \cite{141}}]
{\small\centering
\begin{tabular}{>{\columncolor[rgb]{0.8,0.8,0.75}}r||l|}
\cline{2-2} {\bf Patient} & \emph{Diagnosis Codes} \tabularnewline
\cline{2-2} \emph{Alice} & e, f \tabularnewline \cline{2-2}
 \emph{Mary} & e, f \tabularnewline \cline{2-2}
 \emph{George} & e, f \tabularnewline\cline{2-2}
 \emph{Jack} & e, f \tabularnewline \cline{2-2}
 \emph{Anne} & e, f \tabularnewline\cline{2-2}
 \emph{Tom} & e, f \tabularnewline\cline{2-2}
 \emph{Jim} & e \tabularnewline\cline{2-2}
 \emph{Steve} & f \tabularnewline\cline{2-2}
 \emph{David} & - \tabularnewline\cline{2-2}
 \emph{Ellen} & - \tabularnewline\cline{2-2}
\end{tabular}\label{anon_b}
}\end{minipage}\hspace{+13mm}
\begin{minipage}[b]{1.8in}
\subfigure[COAT ($k=5, s=15\%$, constraints of Figs. \ref{original_data_pc} and \ref{original_data_uc})]
{\small\centering
\begin{tabular}{>{\columncolor[rgb]{0.8,0.8,0.75}}r||l|}
\cline{2-2} {\bf Patient} & \emph{Diagnosis Codes} \tabularnewline
\cline{2-2} \emph{Alice} & $(a,b)$, c, e, f, $(g,h)$ \tabularnewline \cline{2-2}
 \emph{Mary} & $(a,b)$, c, e, f, $(g,h)$ \tabularnewline \cline{2-2}
 \emph{George} & ~c, e, f, $(g,h)$ \tabularnewline\cline{2-2}
 \emph{Jack} & $(a,b)$, c, e, f \tabularnewline \cline{2-2}
 \emph{Anne} & ~e, f, $(g,h)$ \tabularnewline\cline{2-2}
 \emph{Tom} & ~e, f, $(g,h)$ \tabularnewline\cline{2-2}
 \emph{Jim} & $(a,b)$, e \tabularnewline\cline{2-2}
 \emph{Steve} & $(a,b)$, c, f \tabularnewline\cline{2-2}
 \emph{David} & $(a,b)$, c \tabularnewline\cline{2-2}
 \emph{Ellen} & $(a,b)$, $(g,h)$ \tabularnewline\cline{2-2}
\end{tabular}\label{an_our}
}
\end{minipage}
\end{center}
\vspace{-5mm}
\caption{Three possible anonymizations of the dataset of Fig. \ref{original_data}.}
\end{figure*}

\begin{example}
Imagine that a hospital needs to publish the dataset of Fig. \ref{original_data}, where each transaction corresponds to a patient (patients' names are not released) and consists of a set of diagnosis codes (\textbf{items}). Certain item combinations (\textbf{itemsets}), such as the combinations of diagnosis codes $abc$ and $defgh$ of Fig. \ref{original_data_pc}, are regarded as potentially linkable and must be associated with at least $5$ transactions to prevent identity disclosure. At the same time, the published data must be able to support a study in which the number of patients diagnosed with \emph{cold}, denoted with $c$, needs to be accurately determined. These requirements can be modeled via the privacy constraints in Fig. \ref{original_data_pc} and a utility constraint $\{c\}$. An anonymization that satisfies them is given in Fig. \ref{an_our}. Observe
that each patient can be linked to no less than $5$ transactions using the combinations $abc$ or $defgh$, while the number of patients suffering from \emph{cold} can still be accurately computed after anonymization.
\label{example2}
\end{example}

\subsection{Limitations of Existing Methodologies}

Preventing identity disclosure in transactional data was investigated recently \cite{140,142}; however, existing approaches inadequately deal with the scenario of Example \ref{example2} for several reasons. First, they support a limited class of privacy requirements. While they are effective at protecting all itemsets comprised of a certain number of items (i.e., itemsets of certain {\bf size}),  potentially linkable itemsets may involve certain items only and vary in size. For instance, in the presence of the privacy constraints in Example \ref{example2}, the approaches of \cite{140} and \cite{142} would protect all $56$ combinations of $5$ diagnosis codes (e.g., $abcde, bcdef$, etc.). Unnecessarily protecting itemsets significantly distorts data because the number of itemsets that require protection rapidly increases with their size.

Second, prior approaches neglect specific data utility requirements. Thus, they do not guarantee generating practically useful solutions for environments where usability is based on well-defined policies, such as in epidemiology (where combinations of diagnosis codes form syndromes) \cite{157}. For instance, when applied to the dataset of Fig. \ref{original_data}, \cite{140} and \cite{142} produce the anonymizations of Figs. \ref{anon_a} and \ref{anon_b} (respectively). These anonymizations do not support the study of Example 1 because they do not allow the number of patients suffering from \emph{cold} to be accurately computed, as a result of violating the utility constraint $\{c\}$.

Third, the existing literature considers only a small number of possible transformations to meet privacy constraints. For example, \cite{140} protects $a, b$ and $c$ by generalizing them to their closest common ascendant $(a,b,c)$ according to the hierarchy of Fig. \ref{hier1}, while \cite{142} does so by eliminating them from the released dataset. This incurs excessive information loss, but, as we show in this paper, is unnecessary because items can be generalized together in a hierarchy-free manner to reduce the amount of information loss incurred.

\begin{figure}[!ht]
\small\centering
\begin{tabular}{c}
\includegraphics[scale=0.5]{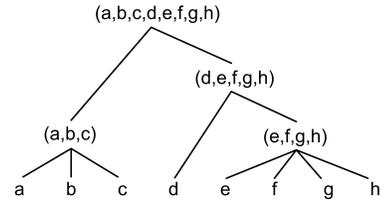}
\end{tabular}\vspace{-3mm}
\caption{Hierarchy for the dataset of Fig. \ref{original_data}.}\label{hier1}
\end{figure}
\subsection{Contributions}\label{contributions_intro}

This paper proposes an innovative approach to anonymize transactional data under privacy and utility constraints. Given a dataset and a set of constraints, our approach prevents identity disclosure by ensuring that each transaction is indistinguishable from at least $k-1$ other transactions with respect to privacy constraints, while satisfying utility constraints. For instance, when applied to anonymize the data in Fig. \ref{original_data} using the constraints in
Figs. \ref{original_data_pc} and \ref{original_data_uc}, our approach generates the anonymization of Fig. \ref{an_our}, which satisfies the imposed constraints.

\subsubsection*{Our work makes the following specific contributions:}

First, we propose a novel constraint specification model that enables data owners to express detailed privacy and utility requirements. We introduce privacy constraints that allow protecting the required itemsets only, thereby reducing information loss, and utility constraints that ensure the utility of anonymized data in practice. Acknowledging that constraint specification may be difficult in lack of specific domain knowledge, we also propose an algorithm to extract privacy constraints from the data and a recipe for specifying utility constraints.

Second, we propose an item generalization model that eliminates the requirement for hierarchies. This is important because many domains do not fit into rigid hierarchies \cite{Hepp,R2006}. In doing so, our framework can produce fine-grained anonymizations with substantially better utility than that of \cite{140} and \cite{142}. For example, to meet the privacy constraint $p_1$ (Fig. \ref{original_data_pc}), our solution generalizes $a$ and $b$ to $(a, b)$ and leaves $c$ intact, retaining more information than \cite{140}, which releases $(a, b, c)$, and \cite{142}, which eliminates $a, b$ and $c$.

Third, we develop COnstraint-based Anonymization of Transactions (COAT), an algorithm that iteratively selects privacy constraints and transforms data to satisfy them.
For each privacy constraint, COAT generalizes items in accordance with the specified utility constraints and attempts to minimize information loss. When a privacy constraint cannot be satisfied through generalization, COAT suppresses the least number of items required to meet this constraint.

Fourth, we investigate the effectiveness of our approach through experiments on widely-used benchmark datasets and a case study on patient-specific data extracted from the Electronic Medical Record system \cite{Stead} of Vanderbilt University Medical Center, a large healthcare provider in the U.S. Our results verify that the proposed methodology is able to anonymize transactions under various privacy and utility constraints with less information loss than the state-of-the-art method \cite{140} and to generate high-quality anonymizations in a real-world data dissemination scenario.

\subsection{Paper Organization}\label{paper_org}

The rest of the paper is organized as follows. Related work is reviewed in Section \ref{related_work}. We formally define our constraint specification model and the problem of anonymizing transactions under constraints in Section \ref{bp_def}. COAT is presented in Section \ref{algorithm}. Section \ref{up_specification} presents methods for specifiying constraints. Sections \ref{experiments} and \ref{casestudy} report experimental results and a case study of applying COAT respectively. Finally, we conclude the paper in Section \ref{conclusion}.

\section{Related Work}\label{related_work}
The problem of identity disclosure in \emph{relational} data publishing has been studied extensively \cite{32,23,20,8,34}. A well-established principle that can prevent this type of threat is $k$-anonymity \cite{32,23}. A relational table is $k$-anonymous when each record is indistinguishable from at least $k-1$ others with respect to potentially identifying attributes (termed quasi-identifiers or \emph{QID}s). $K$-anonymity can be achieved by \emph{generalization}, a process in which \emph{QID} values are replaced by more general ones specified by a generalization (recoding) model, or via \emph{suppression}, a technique that removes values or records from anonymized data \cite{20}. In this work, we anonymize transactions to thwart identity disclosure by applying item generalization and suppression. Beyond identity disclosure is another threat in relational data publishing known as \emph{attribute disclosure} (i.e., the inference of an individual's sensitive values), which can be guarded against by several principles, such as $l$-diversity \cite{2}. We note that our approach can be extended to prevent attribute disclosure as well, but this is beyond the scope of this paper.

Privacy-preserving publication of transactions was recently investigated \cite{140,141,142}. First, in \cite{140}, $k^m$-anonymity was proposed to prevent attackers with the knowledge of at most $m$ items from linking an identified individual to less than $k$ published transactions. The authors of \cite{140} designed three algorithms to enforce $k^m$-anonymity, but the Apriori algorithm is the only one that is sufficiently scalable for use in practice. It operates in a bottom-up fashion, beginning with itemsets comprised of one item and subsequently considers incrementally larger itemsets. In each iteration, $k^m$-anonymity is enforced using a hierarchy-based generalization model. Second, \cite{142} proposed $(h,k,p)$-coherence, a privacy principle that addresses both identity and attribute disclosure. This principle assumes a fixed classification of items into potentially linkable and sensitive, treats potentially linkable items similarly to $k^m$-anonymity, and additionally limits the probability of inferring sensitive items. Following \cite{140}, we do not adopt such a classification, and allow any item to be treated as potentially linkable for the remaining (sensitive) items. To satisfy $(h,k,p)$-coherence, \cite{142} proposed an algorithm that discovers all unprotected itemsets of minimal size and protects them by iteratively suppressing the item contained in the greatest number of those itemsets. The primary differences between our work and the approaches of \cite{140} and \cite{142} were discussed in the Introduction. Finally, \cite{141} developed a method that eliminates attribute disclosure based on \emph{bucketization} \cite{79} and $l$-diversity. Our work is orthogonal to \cite{141}; we do not aim to thwart attribute disclosure, but rather apply item generalization and suppression to guard against identity disclosure.

Preserving privacy has also been considered in contexts related to knowledge sharing, where the goal is to prevent the inference of rules or patterns \cite{156,144}. Our method is fundamentally different from this line of research, as we aim to publish data that prevents the disclosure of individuals' identities instead.

\section{Background and Problem Formulation}\label{bp_def}
Let $\mathcal{I}=\{i_1,...,i_M\}$ be a finite set of literals, called \emph{items}. Any subset $I\subseteq\mathcal{I}$ is called an \emph{itemset} over $\mathcal{I}$, and is represented as the concatenation of the items it contains. An itemset that has $m$ items or equivalently a \emph{size} of $m$, is called an $m$-itemset. A dataset $\mathcal{D}=\{T_1,...,T_N\}$ is a set of $N$ transactions. Each \emph{transaction} $T_n$, $n=1,...,N$, over $\mathcal{I}$ corresponds to a unique individual and is a pair $T_n=(\mbox{\textit{tid}},I)$, where $I$ is the itemset and $\mbox{\textit{tid}}$ is a unique identifier. A transaction $T_n=(\mbox{\textit{tid}},J)$ \emph{supports} an itemset $I$ over $\mathcal{I}$, if $I \subseteq J$. Given an itemset $I$ over $\mathcal{I}$ in $\mathcal{D}$, we use $sup(I, \mathcal{D})$ to represent the number of transactions $T_n\in \mathcal{D}$ that support $I$. This set of transactions, called \emph{supporting transactions} of $I$ in $\mathcal{D}$, is denoted as $\mathcal{D}_I$.

\subsection{Set-based Anonymization}\label{generalization_model}

We propose a set-based anonymization model for transactional data, formally defined as follows:

\begin{definition}\hspace{+1mm}\textsc{(Set-Based Anonymization).}
A set-based anonymization of $\mathcal{I}$ is a set $\tilde{\mathcal{I}}=\{\tilde{i_1},...,\tilde{i}_{\tilde{M}}\}$
with the following properties:
(1) each item in $\mathcal{I}$ is mapped to a unique item $\tilde{i_m}\in \tilde{\mathcal{I}}$, $m \in [1, \tilde{M}]$, that is a subset of $\mathcal{I}$, using an anonymization function $\Phi: \mathcal{I} \rightarrow \tilde{\mathcal{I}}$, (2) $\bigcup_{m=1}^{\tilde{M}}\tilde{i_m}=\mathcal{I}-\mathcal{S}$, where $\mathcal{S}$ is the set of items mapped to the empty subset of $\mathcal{I}$, and (3) $\tilde{i_r}\cap\tilde{i_s}=\emptyset$, for any $\tilde{i_r},\tilde{i_s} \in \tilde{\mathcal{I}}$, $r\neq s$.
\end{definition}

$\tilde{i}$ is a \emph{generalized item} when it contains at least one $i_r \in \mathcal{I}$ that is mapped to a non-empty subset of $\mathcal{I}$. We use the notation $\tilde{i} = (i_1,\ldots,i_m)$ to refer to its elements (items) from $\mathcal{I}$. Any item from $\mathcal{I}$ that is mapped to the empty subset of $\mathcal{I}$, denoted as $(~)$, is called \emph{suppressed}, and is contained in the set $\mathcal{S}$. An example of a set-based anonymization is shown in Fig. \ref{mapf}. Notice that  items $a$ and $b$ are mapped to the same generalized item $(a,b)$, whereas item $d$ is suppressed.

\begin{figure}[!ht]
\scriptsize\centering
\includegraphics[scale=0.55]{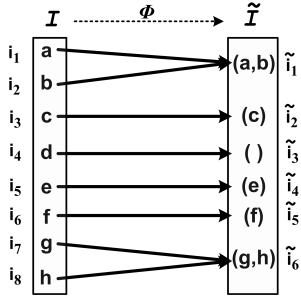}
\vspace{-3mm}
\caption{An example of set-based anonymization.}\label{mapf}
\end{figure}

The set-based anonymization model is very flexible because it does not force any items to be generalized together, as formally shown in Corollary \ref{corollary:no-force}. This is different from the \emph{full-subtree} generalization model \cite{16} adopted in \cite{140}, which forces all siblings of an original (leaf-level) item to be mapped to an intermediate node in the hierarchy when this item is generalized to the intermediate node.

\begin{corollary}
In the set-based anonymization model, mapping an item $i_r \in \mathcal{I}$ to a generalized item $\tilde{i}$ does not force any other item $i_s \in \mathcal{I}$ to be mapped to $\tilde{i}$.
\label{corollary:no-force}\end{corollary}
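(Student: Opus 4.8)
The plan is to prove the statement by exhibiting, for an arbitrary pair of distinct items $i_r,i_s\in\mathcal{I}$ and an arbitrary target $\tilde{i}\subseteq\mathcal{I}$ with $i_r\in\tilde{i}$ and $i_s\notin\tilde{i}$, one concrete set-based anonymization $\tilde{\mathcal{I}}$ in which $\Phi(i_r)=\tilde{i}$ while $\Phi(i_s)\neq\tilde{i}$. Since ``forcing $i_s$ to be mapped to $\tilde{i}$'' would mean that \emph{every} valid anonymization sending $i_r$ to $\tilde{i}$ must also send $i_s$ to $\tilde{i}$, producing a single valid anonymization that violates the latter conclusion suffices to refute forcing. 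So the proof is a witness construction plus a check against the three defining properties of Definition~2.1.

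First I would fix the target $\tilde{i}$ as above and define $\Phi$ by mapping every item of $\tilde{i}$ (in particular $i_r$) to $\tilde{i}$, and mapping every remaining item $i\in\mathcal{I}\setminus\tilde{i}$ (in particular $i_s$) to the singleton $\{i\}$, with $\mathcal{S}=\emptyset$, so no item is suppressed; the resulting image set is $\tilde{\mathcal{I}}=\{\tilde{i}\}\cup\{\{i\}:i\in\mathcal{I}\setminus\tilde{i}\}$. Then I would verify the three properties: property (1) holds because $\Phi$ is a well-defined function each of whose images is a nonempty subset of $\mathcal{I}$ (so $\tilde{i}$ indeed qualifies as a generalized item, containing $i_r$ which is mapped to the nonempty subset $\tilde{i}$); property (2) holds because $\bigcup_{\tilde{i}_m\in\tilde{\mathcal{I}}}\tilde{i}_m=\tilde{i}\cup(\mathcal{I}\setminus\tilde{i})=\mathcal{I}=\mathcal{I}-\mathcal{S}$; and property (3), pairwise disjointness, holds because $\tilde{i}$ is disjoint from each singleton $\{i\}$ with $i\notin\tilde{i}$ and distinct singletons are trivially disjoint. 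Hence $\tilde{\mathcal{I}}$ is a legitimate set-based anonymization with $\Phi(i_r)=\tilde{i}$ and $\Phi(i_s)=\{i_s\}\neq\tilde{i}$, which establishes the corollary.

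The step I expect to be the main obstacle — modest as it is — is property (3): it is precisely the analogue of the constraint that makes full-subtree generalization rigid, so it is the one place where a hidden dependency between $i_r$ and $i_s$ could sneak back in, and I would argue carefully that in the set-based model disjointness only constrains which items share a common image, never which items must do so. The secondary subtlety is pinning down the intended reading of ``does not force'' so that a single counter-witness is recognized as a complete argument; once both are dispatched, the construction and verification are routine.
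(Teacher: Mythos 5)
Your proof is correct. The paper states this corollary without any proof at all (it is treated as immediate from Definition~3.1), and your explicit witness construction --- mapping the items of $\tilde{i}$ to $\tilde{i}$ and every other item to its own singleton, then checking the three defining properties --- is a complete and faithful formalization of exactly the flexibility the paper is asserting, including the correct reading of ``does not force'' as the existence of one valid anonymization separating $i_r$ from $i_s$.
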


Additionally, as explained in Corollary \ref{corollary:special-case}, the set-based anonymization model contains the generalization model used in \cite{140} as a special case. Thus, our model enables exploring a much larger set of possible anonymizations, which have the potential to incur less information loss.

\begin{corollary}
The \emph{full-subtree} recoding model is a special case of the set-based anonymization model, where each $\tilde{i_m}$, $m \in [1, \tilde{M}]$, is mapped to an intermediate node of the considered hierarchy.\label{corollary:special-case}
\end{corollary}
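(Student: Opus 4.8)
The plan is to show that every anonymization obtainable by full-subtree recoding on a hierarchy $H$ over $\mathcal{I}$ can be rewritten as a set-based anonymization, by exhibiting the induced function $\Phi$ explicitly and checking that it satisfies the three defining properties of a set-based anonymization; the fact that the inclusion is \emph{proper} (so that ``special case'' is the right phrasing) then follows immediately from Corollary~\ref{corollary:no-force}.

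First I would make the full-subtree model precise, since it is only cited (\cite{16}) rather than restated here: a full-subtree recoding of $\mathcal{I}$ is determined by a \emph{cut} $C$ of $H$ — a set of nodes such that the subtrees rooted at the nodes of $C$ partition the leaf set $\mathcal{I}$, equivalently every leaf $i_r$ has a unique ancestor-or-self in $C$. Writing $L(v)\subseteq\mathcal{I}$ for the set of leaves of the subtree rooted at $v$, define the anonymization function by $\Phi(i_r) := L(v)$, where $v$ is the unique node of $C$ on the root path of $i_r$, and set $\mathcal{S}=\emptyset$ (full-subtree recoding performs no suppression).

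Next I would verify the three properties for this $\Phi$. Property~(1): $\Phi$ is single-valued because each leaf has exactly one ancestor-or-self in $C$, and $L(v)\subseteq\mathcal{I}$ by construction, so $\tilde{\mathcal{I}}=\{L(v):v\in C\}$. Property~(2): $\bigcup_{v\in C}L(v)=\mathcal{I}=\mathcal{I}-\mathcal{S}$, since the subtrees rooted at $C$ cover every leaf. Property~(3): for distinct $v,w\in C$ neither is an ancestor of the other (a cut is an antichain), so their subtrees are vertex-disjoint and hence $L(v)\cap L(w)=\emptyset$. Therefore $\Phi$ is a set-based anonymization, and each $\tilde{i}_m$ equals $L(v)$ for a node $v$ of $H$ — an internal (intermediate) node when $i_r$ is genuinely generalized, and the leaf $i_r$ itself in the degenerate un-generalized case — which is exactly the characterization in the statement. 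To finish, I would note that by Corollary~\ref{corollary:no-force} the set-based model also admits mappings whose image is not of the form $\{L(v)\}$ (for instance grouping $a,b$ into $(a,b)$ while leaving their sibling $c$ intact), so the full-subtree model is strictly contained in it.

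The step I expect to be the main obstacle is not any of the verifications, which are routine once the objects are in place, but the first one: extracting from \cite{16} the exact correctness conditions on a full-subtree generalization — namely that valid recodings correspond precisely to antichains of $H$ whose subtrees partition the leaves — and fixing the bookkeeping that identifies a hierarchy node with its leaf-set, since all three property checks hinge on that correspondence.
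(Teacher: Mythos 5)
Your proof is correct and is exactly the argument the paper has in mind: the paper states this corollary without any proof, treating it as immediate from the observation that a full-subtree recoding is determined by a cut (antichain) of the hierarchy whose subtrees partition the leaf set, which is precisely a set-based anonymization with $\mathcal{S}=\emptyset$ and with each generalized item equal to the leaf-set of a cut node. Your verification of the three properties, your handling of the degenerate case (an un-generalized leaf maps to itself rather than to an \emph{intermediate} node, a slight imprecision in the corollary's wording), and the appeal to Corollary~\ref{corollary:no-force} for properness of the inclusion are all sound.
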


Our anonymization model transforms a dataset $\mathcal{D}$ into a new dataset $\tilde{\mathcal{D}}$ that helps prevent identity disclosure, since the number of transactions of $\tilde{\mathcal{D}}$ that can be associated with an individual is increased, as proven in Theorem \ref{theorem:gen-principle}.

\begin{theorem}\label{theorem:gen-principle}\hspace{+1mm}\textsc{(Generalization Principle).}
Given two items $i_r, i_s$ that appear in transactions of $\mathcal{D}$ and are mapped to the same generalized item $\tilde{i}$ after anonymizing $\mathcal{D}$ to $\tilde{\mathcal{D}}$, and an itemset $i_ri_s$, it holds that $$sup(\tilde{i}, \tilde{\mathcal{D}})=sup(i_r,\mathcal{D})+sup(i_s,\mathcal{D})-sup(i_ri_s,\mathcal{D})$$
\end{theorem}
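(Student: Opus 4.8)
The plan is to prove the identity by a short inclusion--exclusion argument, once a clean description of which anonymized transactions support $\tilde{i}$ is in place. First I would record that the anonymization is applied transaction by transaction and leaves the $\mbox{\textit{tid}}$s untouched, so there is a bijection between the transactions $T_n=(\mbox{\textit{tid}},J)$ of $\mathcal{D}$ and the transactions $\tilde{T}_n=(\mbox{\textit{tid}},\tilde{J})$ of $\tilde{\mathcal{D}}$, where $\tilde{J}$ is obtained by replacing each $i\in J$ with $\Phi(i)$ and discarding the suppressed item $(\,)$. Under this correspondence, $sup(\tilde{i},\tilde{\mathcal{D}})=|\tilde{\mathcal{D}}_{\tilde{i}}|$ and we may freely identify a transaction of $\tilde{\mathcal{D}}$ with its original in $\mathcal{D}$.

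The key step is the characterization: $\tilde{T}_n$ supports $\tilde{i}$ if and only if $T_n$ supports $i_r$ or $i_s$ (here $\tilde{i}=(i_r,i_s)$, i.e.\ $i_r$ and $i_s$ are precisely the items of $\mathcal{I}$ mapped to $\tilde{i}$). For the forward direction, if $\tilde{i}\in\tilde{J}$ then $\Phi(i)=\tilde{i}$ for some $i\in J$; since $\Phi$ is well defined and, by clause~(3) of the set-based anonymization definition, the generalized items are pairwise disjoint, the only items with image $\tilde{i}$ are $i_r$ and $i_s$, so $i\in\{i_r,i_s\}$ and hence $i_r\in J$ or $i_s\in J$. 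Conversely, if $i_r\in J$ then $\tilde{i}=\Phi(i_r)\in\tilde{J}$, and likewise for $i_s$; in either case $\tilde{T}_n$ supports $\tilde{i}$. This shows that $\tilde{\mathcal{D}}_{\tilde{i}}$ corresponds exactly to $\mathcal{D}_{i_r}\cup\mathcal{D}_{i_s}$.

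It then remains to count. By the characterization, $sup(\tilde{i},\tilde{\mathcal{D}})=|\mathcal{D}_{i_r}\cup\mathcal{D}_{i_s}|=|\mathcal{D}_{i_r}|+|\mathcal{D}_{i_s}|-|\mathcal{D}_{i_r}\cap\mathcal{D}_{i_s}|$. Finally, a transaction supports both $i_r$ and $i_s$ precisely when it supports the itemset $i_ri_s$, so $\mathcal{D}_{i_r}\cap\mathcal{D}_{i_s}=\mathcal{D}_{i_ri_s}$ and the displayed right-hand side equals $sup(i_r,\mathcal{D})+sup(i_s,\mathcal{D})-sup(i_ri_s,\mathcal{D})$, which is the claim.

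The one point that needs care --- the main obstacle, such as it is --- is the ``only if'' half of the characterization: one must invoke the disjointness of generalized items from the set-based anonymization definition, together with the hypothesis that $i_r,i_s$ are exactly the items mapped to $\tilde{i}$, to exclude the possibility that a third original item also generalizes to $\tilde{i}$ and thereby inflates $sup(\tilde{i},\tilde{\mathcal{D}})$. Everything else is routine bookkeeping about the $\mbox{\textit{tid}}$ correspondence and a one-line application of inclusion--exclusion.
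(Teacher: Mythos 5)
Your proof is correct and follows essentially the same route as the paper's one-line argument: the transactions of $\tilde{\mathcal{D}}$ supporting $\tilde{i}$ correspond exactly to the transactions of $\mathcal{D}$ supporting $i_r$ or $i_s$, and inclusion--exclusion yields the identity. You merely spell out the details the paper leaves implicit (the \emph{tid} correspondence and the use of disjointness of generalized items to rule out a third item mapping to $\tilde{i}$), which is a sensible elaboration rather than a different approach.
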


\begin{proof}
The proof follows directly from the fact that the items $i_r$ and $i_s$, and the itemset $i_ri_s$ are mapped to a common literal $\tilde{i} \in \tilde{\mathcal{D}}$ in all transactions of $\mathcal{D}$ that support $i_r, i_s$ or $i_ri_s$.
\end{proof}

We illustrate Theorem \ref{theorem:gen-principle} using Figs. \ref{original_data} and \ref{an_our}. Consider items $a$, $b$ and the itemset $ab$ in Fig. \ref{original_data}, which have support of $6$, $3$ and $2$ respectively, and are mapped to the same generalized item $(a,b)$ in Fig. \ref{an_our}. Observe that $(a,b)$ has a support of $7$ that is equal to the sum of the supports of $a$, $b$ minus that of $ab$.

\subsection{Privacy Constraints}\label{privacy_constraint_subsection}

The integration of privacy constraints is central to our framework because they allow for the explicit definition of which itemsets are potentially linkable and require protection. In what follows, we formally define the notion of privacy constraints and their satisfiability.

\begin{definition}\label{pc_set_def}\hspace{+1mm}\textsc{(Privacy Constraint Set).}
A privacy constraint $p$ is a non-empty set of items in $\mathcal{I}$ that are specified as potentially linkable. The union of all privacy constraints formulates a privacy constraint set $\mathcal{P}$.
\end{definition}

\begin{definition}\label{pc_satisfaction}\hspace{+1mm}
\textsc{(Privacy Constraint Satisfiability).}
Given a parameter $k$, a privacy constraint $p = \{i_1,...,i_r\} \in \mathcal{P}$ is \emph{satisfied} when the corresponding itemset $\bigcup_{m=1}^{r}\Phi(i_m)$ is: (1) supported by at least $k$ transactions in $\tilde{\mathcal{D}}$, or (2) not supported in $\tilde{\mathcal{D}}$ and
each of its proper subsets is either supported by at least $k$ transactions in $\tilde{\mathcal{D}}$ or not supported in $\tilde{\mathcal{D}}$. $\mathcal{P}$ is satisfied when every $p \in \mathcal{P}$ is satisfied.
\end{definition}

To illustrate these definitions, consider the privacy constraint $p_1=\{a, b, c\}$ in Fig. \ref{original_data_pc}. This privacy constraint is satisfied for $k=5$ in the dataset of Fig. \ref{an_our} because $5$ transactions support the itemset $\Phi(a)\cup\Phi(b)\cup\Phi(c)=(a,b)c$.

Satisfying a privacy constraint $p$ prevents identity disclosure because the number of transactions that can be linked to an individual using any subset of items in $p$ is either at least $k$, or zero, as shown in Theorem \ref{quality_constraints_monotonicity}.

\begin{theorem}\label{quality_constraints_monotonicity}\hspace{+1mm}\textsc{(Monotonicity).}
For a given $k$, the satisfaction of a privacy constraint $p$ in $\tilde{\mathcal{D}}$ implies that each privacy constraint $p_j\subset p$ is satisfied in $\tilde{\mathcal{D}}$.
\end{theorem}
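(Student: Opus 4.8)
The plan is to show that if $p$ is satisfied in $\tilde{\mathcal{D}}$ and $p_j \subset p$ is any privacy constraint that is a proper subset of $p$, then $p_j$ is satisfied according to Definition \ref{pc_satisfaction}. Write $p = \{i_1,\ldots,i_r\}$ and let $I = \bigcup_{m=1}^{r}\Phi(i_m)$ be the generalized itemset associated with $p$; similarly let $J = \bigcup_{i_m \in p_j}\Phi(i_m)$ be the generalized itemset associated with $p_j$. Since $p_j \subset p$, we have $J \subseteq I$, so $J$ is a (not necessarily proper) subitemset of $I$. The argument will split into the two cases of Definition \ref{pc_satisfaction} for $p$.

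First I would handle the case where $I$ is supported by at least $k$ transactions in $\tilde{\mathcal{D}}$. By monotonicity of support (any transaction supporting $I$ also supports every subitemset of $I$, in particular $J$), we get $sup(J,\tilde{\mathcal{D}}) \ge sup(I,\tilde{\mathcal{D}}) \ge k$, so condition (1) of Definition \ref{pc_satisfaction} holds for $p_j$. Second, I would handle the case where $I$ is not supported in $\tilde{\mathcal{D}}$ but every proper subset of $I$ is either supported by at least $k$ transactions or not supported at all. Here $J$ is a proper subset of $I$ (it is a subset, and if $J = I$ then $I$ would be supported whenever $J$ is, but we must be slightly careful: $J \subseteq I$ and $J \ne I$ follows because $p_j \subsetneq p$ and the $\Phi(i_m)$ for distinct mapped items are disjoint by property (3) of set-based anonymization, so dropping an item of $p$ strictly shrinks the union unless that item was suppressed — I will need to address the suppressed-item subtlety). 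If $J$ is supported by at least $k$ transactions, condition (1) holds for $p_j$. If $J$ is not supported at all, then condition (2) holds for $p_j$, since every proper subset of $J$ is also a proper subset of $I$ and hence, by hypothesis, is either $k$-supported or unsupported.

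The main obstacle I anticipate is the bookkeeping around suppressed items and the exact correspondence between the combinatorial subset relation $p_j \subset p$ on items of $\mathcal{I}$ and the subitemset relation $J \subseteq I$ on generalized itemsets. If some item of $p \setminus p_j$ maps to the empty set (is suppressed), then removing it does not change the union, so $J = I$; in that situation the claim still holds because $p_j$ inherits exactly the same satisfaction witness as $p$. Conversely, if a non-suppressed item is removed, disjointness of the $\Phi$-images guarantees $J \subsetneq I$, which is what the second case needs. I would state this correspondence as a short preliminary observation and then the two-case argument above goes through mechanically; everything else is just monotonicity of support and the fact that ``proper subset of a proper subset of $I$ is a proper subset of $I$.''
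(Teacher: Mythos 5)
Your proposal is correct and follows essentially the same route as the paper's proof: a case split on the two satisfaction conditions of Definition~\ref{pc_satisfaction} for $p$, using monotonicity of support on $J \subseteq I$ and the observation that a proper subset of $J$ is a proper subset of $I$ (the paper merely packages this as a contradiction argument, and states the support inequality as strict where $\geq$ is all that holds or is needed). Your explicit treatment of suppressed items and of the unsupported-$I$ case is in fact more careful than the paper's, which glosses over both.
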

\begin{proof}
Assume that a privacy constraint $p_j\subset p$ is not satisfied in $\tilde{\mathcal{D}}$ for this value of $k$. Then, according to Definition \ref{pc_satisfaction}, the satisfaction of $p$ implies that the itemset $I=\bigcup_{\forall i_m\in p}\Phi(i_m)$ is supported by either at least $k$ or $0$ transactions in $\tilde{\mathcal{D}}$. Now, consider an itemset $J=\bigcup_{\forall i_m \in p_j}\Phi(i_m)$, which is derived by applying $\Phi$ on each item in $p_j$. Since $J\subset I$, we have $sup(J,\tilde{\mathcal{D}}) > sup(I,\tilde{\mathcal{D}})$ when $sup(I,\tilde{\mathcal{D}}) \geq k$ due to the monotonicity principle \cite{150}. When $sup(I,\tilde{\mathcal{D}})=0$, $p_j$ is satisfied by Definition \ref{pc_satisfaction}. In either case $p_j$ is satisfied in $\tilde{\mathcal{D}}$ for the given $k$, which contradicts the assumption and proves that the theorem holds true.
\end{proof}

Our privacy constraint specification model offers two benefits. First, it allows data owners to specify a range of different privacy requirements. For instance, it can be used to protect specific itemsets of various sizes, or to provide the same privacy guarantees as $k^m$-anonymity (by formulating a privacy constraint set that consists of all itemsets of size $m$). Second, our model allows protecting any set of itemsets without enforcing any additional itemsets to be unnecessarily protected. This is important because unecessarily protecting itemsets may significantly increase the amount of information loss incurred to anonymize data.

\subsection{Utility Constraints} \label{utility_constraints_section}

Privacy protection is offered at the expense of data utility \cite{32,20}, and so it is important to ensure that anonymized data is not overly distorted. Existing approaches attempt to do so by minimizing the amount of information loss incurred when anonymizing transactions \cite{140,141}, but do not guarantee furnishing a useful result for intended applications. By contrast, our methodology offers such guarantees
through the introduction of utility constraints. Before formally defining such constraints,
we make the following important observations related to data usefulness.
\centerline{}
\noindent {\bf Observation 1} Mapping a set of items in $\mathcal{D}$ to the same generalized item in the anonymized dataset $\tilde{\mathcal{D}}$ introduces distortion because these items become indistinguishable in $\tilde{\mathcal{D}}$. When there is no control of how specific items are generalized, $\tilde{\mathcal{D}}$ may not be practically useful.
\centerline{}
\noindent {\bf Observation 2} Suppressing an item in $\mathcal{D}$ introduces distortion because this item is not contained in $\tilde{\mathcal{D}}$ and the amount of distortion increases with the number of suppressions.
\centerline{}

Based on these observations, we introduce a utility constraint set $\mathcal{U}$ to limit the amount of generalization items are allowed to receive based on application requirements, and bound the number of items that can be suppressed using a threshold $s$. Definitions \ref{utility_constraint_set_def} and \ref{uc_satisfaction} illustrate the definition of a utility constraint set and its satisfiability respectively.

\begin{definition}\label{utility_constraint_set_def}\hspace{+1mm}\textsc{(Utility Constraint Set).}
A utility constraint set $\mathcal{U}$ is a partition of $\mathcal{I}$ that declares the set of allowable mappings of the items from $\mathcal{I}$ to those of $\tilde{\mathcal{I}}$ through $\Phi$. Each element of $\mathcal{U}$ is called a utility constraint.
\end{definition}

\begin{definition}\label{uc_satisfaction}\hspace{+1mm}\textsc{(Utility Constraint Set Satisfiability).}
Given sets $\mathcal{I}$, $\tilde{\mathcal{I}}$, a utility constraint set $\mathcal{U}$, and a parameter $s$, $\mathcal{U}$ is \emph{satisfied}
if and only if (1) for each non-empty $\tilde{i_m} \in \tilde{\mathcal{I}}$, $\exists u_j \in \mathcal{U}$ such that $\tilde{i_m} \subseteq u_j$, and
(2) the fraction of items in $\mathcal{I}$ contained in the set of suppressed items $\mathcal{S}$ is at most $s\%$.
\end{definition}

The first condition limits the maximum amount of generalization each item is allowed to receive in a set-based anonymization $\mathcal{\tilde{I}}$, while the second condition ensures that the number of suppressed items is controlled by a threshold specified by data owners. When both of these conditions hold, $\mathcal{U}$ is satisfied, and $\mathcal{\tilde{I}}$ corresponds to a dataset that can be meaningfully analyzed. Example \ref{last_ex} illustrates the above definitions.

\begin{example}
Consider Fig. \ref{util_map}, in which $\tilde{i_1}\subseteq u_1$, $\tilde{i_2}\subseteq u_2$, and $\tilde{i_4},\tilde{i_5},\tilde{i_6}$ are subsets of $u_4$.
The percent of suppressed items is $12.5\%$ because $\mathcal{I}$ consists of $8$ items (see Fig. \ref{mapf}), and only $d$ is suppressed. Thus, $\mathcal{U}$ is satisfied. \label{last_ex}
\end{example}
\vspace{-3mm}
\begin{figure}[!ht]
\scriptsize\centering
\includegraphics[width=0.35\columnwidth]{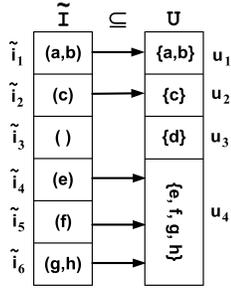}
\vspace{-3mm}
\caption{Utility Constraint Set Satisfiability example for the set-based anonymization of Fig. \ref{mapf}}\label{util_map}
\end{figure}

We also observe that the number of the supporting transactions of a \emph{generalized} item in the anonymized dataset $\tilde{\mathcal{D}}$ is equal to the number of transactions supporting \emph{any} item in the original dataset $\mathcal{D}$ that is mapped to this generalized item, as illustrated in Theorem \ref{mytheorem}.

\begin{theorem}\label{mytheorem}
Given a generalized item $\tilde{i_m}\in \tilde{\mathcal{I}}$ such that $\tilde{i_m}=\{i_1,...,i_r\}$, it holds that
$$|\mathcal{\tilde{D}}_{\tilde{i_m}}|=|\mathcal{D}_{i_1}\cup\ldots\cup\mathcal{D}_{i_r}|$$
\noindent where $|\mathcal{\tilde{D}}_{\tilde{i_m}}|$ denotes the size of the set of supporting transactions of
$\tilde{i_m}$ in $\mathcal{\tilde{D}}$, and $|\mathcal{D}_{i_1}\cup\ldots\cup\mathcal{D}_{i_r}|$ the size of the set
of transactions supporting at least one of the items in $\{i_1,...,i_r\}$ in $\mathcal{D}$.
\end{theorem}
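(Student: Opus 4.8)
The plan is to prove a set equality $\mathcal{\tilde{D}}_{\tilde{i_m}} = \mathcal{D}_{i_1}\cup\ldots\cup\mathcal{D}_{i_r}$ as sets of transactions (identified by their $\mathit{tid}$s), from which the claim about cardinalities follows immediately. The key fact I would invoke is the definition of set-based anonymization together with its disjointness property (condition (3) of the Set-Based Anonymization definition): the anonymization function $\Phi$ replaces, in every transaction, each original item $i_j$ by the unique generalized item $\tilde{i}\in\tilde{\mathcal{I}}$ containing it, and distinct generalized items are pairwise disjoint. So a transaction in $\tilde{\mathcal{D}}$ supports $\tilde{i_m}$ precisely when the corresponding original transaction contained at least one of the items mapped into $\tilde{i_m}$, i.e. at least one of $i_1,\ldots,i_r$.

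First I would set up notation: let $T_n=(\mathit{tid},J)$ be an arbitrary transaction of $\mathcal{D}$ and let $\tilde{T}_n=(\mathit{tid},\tilde{J})$ be its image under $\Phi$ in $\tilde{\mathcal{D}}$, where $\tilde{J}=\{\Phi(i)\mid i\in J\}$ (with suppressed items contributing nothing). Next I would prove the forward inclusion: if $\tilde{T}_n\in\mathcal{\tilde{D}}_{\tilde{i_m}}$, then $\tilde{i_m}\in\tilde{J}$, so $\tilde{i_m}=\Phi(i)$ for some $i\in J$; since $\tilde{i_m}=\{i_1,\ldots,i_r\}$ and $\Phi$ maps each item to the (unique) generalized item it belongs to, $i$ must be one of $i_1,\ldots,i_r$, hence $T_n\in\mathcal{D}_{i_j}$ for that $j$ and therefore $T_n\in\mathcal{D}_{i_1}\cup\ldots\cup\mathcal{D}_{i_r}$. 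Then I would prove the reverse inclusion: if $T_n\in\mathcal{D}_{i_j}$ for some $j\in\{1,\ldots,r\}$, then $i_j\in J$, so $\Phi(i_j)=\tilde{i_m}\in\tilde{J}$, i.e. $\tilde{T}_n$ supports $\tilde{i_m}$ and thus $\tilde{T}_n\in\mathcal{\tilde{D}}_{\tilde{i_m}}$. Since the $\mathit{tid}$s are preserved by $\Phi$, this gives a bijection between the two transaction sets, so their sizes coincide.

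The only subtlety — and the one point that needs the disjointness hypothesis rather than being purely formal — is the step "$\tilde{i_m}=\Phi(i)$ forces $i\in\{i_1,\ldots,i_r\}$." Here I would note that by condition (3) the generalized items partition $\mathcal{I}-\mathcal{S}$, so $\Phi^{-1}(\tilde{i_m})$ is exactly the set of elements of $\tilde{i_m}$ viewed as a subset of $\mathcal{I}$, namely $\{i_1,\ldots,i_r\}$; no other original item maps to $\tilde{i_m}$. This is really the main (and essentially only) obstacle, and it is mild: everything else is unwinding the definitions of "supports," of $\Phi$, and of the supporting-transaction sets. I would keep the write-up short, phrasing it as a double inclusion on transaction $\mathit{tid}$s and citing the Set-Based Anonymization definition for the partition property; this is also consistent in spirit with the one-line proof style used for Theorem \ref{theorem:gen-principle}, which is just the $r=2$ instance specialized via inclusion–exclusion.
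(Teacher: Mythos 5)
Your proof is correct and is essentially the expanded version of the argument the paper itself gestures at: the paper omits the proof, pointing to the one-line justification of Theorem \ref{theorem:gen-principle} (all items mapped to a common generalized literal in every supporting transaction), which is exactly the double-inclusion you carry out. Your explicit appeal to the disjointness/partition property to pin down $\Phi^{-1}(\tilde{i_m})=\{i_1,\ldots,i_r\}$ is a sound and worthwhile clarification, but it is the same route, not a different one.
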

\begin{proof}
The proof is omitted, because it is similar to that of Theorem \ref{theorem:gen-principle}.
\end{proof}

Based on Theorem \ref{mytheorem}, we provide the following corollary that highlights the importance of anonymizing data while satisfying utility constraints.
\begin{corollary}\label{uc_coro}
Given a utility constraint set $\mathcal{U}$ that is satisfied, a utility constraint $u_j=\{i_1,...,i_r\} \in \mathcal{U}$, and a set of generalized items $\{\tilde{i_1},...,\tilde{i_s}\}$
constructed by mapping each element of $u_j$ to one of these items, it holds that
$$|\mathcal{\tilde{D}}_{\tilde{i_1}}\cup\ldots\cup\mathcal{\tilde{D}}_{\tilde{i_s}}|=|\mathcal{D}_{i_1}\cup\ldots \cup \mathcal{D}_{i_r}|$$
\end{corollary}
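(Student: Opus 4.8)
The plan is to reduce Corollary~\ref{uc_coro} to a counting argument built directly on Theorem~\ref{mytheorem}. First I would observe that since $\mathcal{U}$ is satisfied, condition~(1) of Definition~\ref{uc_satisfaction} guarantees that each generalized item $\tilde{i_t}$, $t\in[1,s]$, satisfies $\tilde{i_t}\subseteq u_j$; combined with the hypothesis that the $\tilde{i_t}$ are constructed by mapping every element of $u_j=\{i_1,\ldots,i_r\}$ to one of them, and with property~(3) of the Set-Based Anonymization definition (the $\tilde{i_t}$ are pairwise disjoint), the family $\{\tilde{i_1},\ldots,\tilde{i_s}\}$ is exactly a partition of $u_j$ (ignoring suppressed elements, which contribute the empty set). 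Hence $\bigcup_{t=1}^{s}\tilde{i_t}=u_j\setminus\mathcal{S}$, and every original item $i_l\in u_j$ that is not suppressed lies in precisely one $\tilde{i_t}$.

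Next I would apply Theorem~\ref{mytheorem} to each generalized item separately: for each $t$, $\mathcal{\tilde{D}}_{\tilde{i_t}}=\bigcup_{i_l\in\tilde{i_t}}\mathcal{D}_{i_l}$. Taking the union over $t=1,\ldots,s$ then gives
$$\mathcal{\tilde{D}}_{\tilde{i_1}}\cup\cdots\cup\mathcal{\tilde{D}}_{\tilde{i_s}}=\bigcup_{t=1}^{s}\;\bigcup_{i_l\in\tilde{i_t}}\mathcal{D}_{i_l}=\bigcup_{i_l\in u_j\setminus\mathcal{S}}\mathcal{D}_{i_l},$$
where the last equality uses the partition property established above. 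Finally I would argue that the suppressed items of $u_j$ may be dropped without changing the right-hand side: a suppressed item $i_l$ is mapped to the empty set, so it is not contained in any $\tilde{i_t}$ and it does not appear in $\tilde{\mathcal{D}}$; however $\mathcal{D}_{i_l}$ on the original side could in principle be non-empty, so this is the one place that needs care. I would resolve it by noting that the claimed identity is between the transactions supporting some $\tilde{i_t}$ in $\tilde{\mathcal{D}}$ and the transactions supporting some $i_l\in u_j$ in $\mathcal{D}$; the statement as written takes $\{i_1,\ldots,i_r\}$ to be the original items generalized into the $\tilde{i_t}$'s, i.e. the non-suppressed members of $u_j$, so $\bigcup_{i_l\in u_j\setminus\mathcal{S}}\mathcal{D}_{i_l}=\mathcal{D}_{i_1}\cup\cdots\cup\mathcal{D}_{i_r}$ by definition of that index set, and taking cardinalities yields the corollary.

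The main obstacle is purely bookkeeping: making precise which items the symbols $i_1,\ldots,i_r$ range over (all of $u_j$, versus only its non-suppressed part) so that the equality is literally true rather than true only up to suppressed items. Everything else is an immediate consequence of Theorem~\ref{mytheorem} together with the disjointness and covering properties of a set-based anonymization, so once the indexing convention is pinned down the proof is a two-line union computation followed by taking sizes of both sides.
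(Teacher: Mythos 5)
Your proof is correct and follows the route the paper intends: the corollary is presented as an immediate consequence of Theorem~\ref{mytheorem}, and your argument---apply that theorem to each generalized item, then use the pairwise disjointness of the $\tilde{i_t}$ and the fact that they cover $u_j$ to collapse the double union---is exactly that derivation made explicit. The suppression bookkeeping you flag is already excluded by the corollary's hypothesis that \emph{each} element of $u_j$ is mapped to one of the (non-empty) generalized items, so no element of $u_j$ lies in $\mathcal{S}$ and your careful resolution, while harmless, is not actually needed.
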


Thus, the number of transactions of $\mathcal{D}$ supporting any item contained in a utility constraint $u_j\in \mathcal{U}$ can be accurately computed from the anonymized dataset $\mathcal{\tilde{D}}$, when $\mathcal{U}$ is satisfied and all items in $u_j$ have been generalized. This is crucial in many data analysis tasks (e.g., in generalized association rule mining \cite{Srikant}) where the support of itemsets corresponding to \emph{aggregate concepts} (i.e., itemsets with a more general meaning than the items they are comprised of) needs to be determined, as illustrated below.

\begin{example}\label{diabetesexample}
Consider that the dataset of Fig. \ref{original_data} has to be anonymized to support a study in which the number of patients diagnosed with
\emph{diabetes} needs to be accurately computed. Assume also that diagnosis codes $a$ and $b$ correspond to two different forms of \emph{diabetes}. Observe that the number of patients suffering from \emph{diabetes} (i.e., transactions having $a$, $b$ or $ab$) in the dataset of Fig. \ref{original_data} is the same as in the anonymization of this dataset shown in Fig. \ref{an_our}, because this anonymization satisfies the utility constraints of Fig. \ref{original_data_uc} and both $a$ and $b$ in $u_1=\{a,b\}$ have been generalized.
\end{example}

\subsection{Information Loss}\label{information_loss_section}

There may be many anonymizations that satisfy the privacy and utility constraint sets, but they may not be equally useful. Since discovering the one that least harms data utility is important, we propose a measure to capture data utility based on information loss.

\begin{definition}\label{UL_item}\hspace{+1mm}\textsc{(Utility Loss for a Generalized Item).}
The Utility Loss (\textit{UL}) for a generalized item $\tilde{i_m}$ is defined as
\vspace{-2mm}
$$\mbox{\textit{UL}}(\tilde{i_m})=\frac{2^{|\tilde{i_m}|}-1}{2^{M}-1}\times w(\tilde{i_m})\times \frac{sup(\tilde{i_m},\tilde{\mathcal{D}})}{N}$$

\noindent where $|\tilde{i_m}|$ denotes the number of items from $\mathcal{I}$ mapped to $\tilde{i_m}$ using $\Phi$, and $w: \tilde{\mathcal{I}}\rightarrow [0,1]$ is a function assigning a weight to $\tilde{i_m}$.
\end{definition}

$\mbox{\textit{UL}}$ measures the amount of information loss caused by generalizing a set of items as a product of three terms. The first term penalizes a generalized item based on the number of items from $\mathcal{I}$ mapped to it. This is because a generalized item can be interpreted as any of the $2^{|\tilde{i_m}|}-1$ non-empty subsets of the set of items mapped to it \cite{141}, and there are $2^{M}-1$ possible non-empty subsets that can be formed using items from $\mathcal{I}$. The second term is a weight specified by data owners to quantify the harm to data utility caused by a generalized item, according to the items mapped to it. Weights need to be between $0$ and $1$ for normalization purposes, where larger weights are assigned to generalized items comprised of items that are more semantically distant, since such generalized items harm data utility more \cite{140}. The semantic distance of items can be computed in many ways (e.g., based on the height of a hierarchy \cite{23}, the number of leaves of the closest common ascendant of these items in a hierarchy \cite{14}, with the aid of ontologies \cite{155}, or by expert knowledge). The third term is the support of a generalized item in the anonymized dataset, normalized by the number of transactions. Items that appear often in the anonymized dataset are penalized more, since they introduce more data distortion. Example \ref{ulcomputation} illustrates how $\mbox{\textit{UL}}$ can be computed.

\begin{example}\label{ulcomputation}
Consider Fig.~\ref{original_data}, and the anonymized dataset of Fig. \ref{an_our}. Items $a$ and $b$ are generalized to $(a,b)$, which is assigned
a weight of $0.375$ specified by the data owner, and has a support of $7$ in Fig. \ref{an_our}. The $\mbox{\textit{UL}}$ for $(a,b)$ is computed as
$\frac{2^{2}-1}{2^{8}-1} \times 0.375 \times \frac{7}{8} \approx 0.004$.
\end{example}

Based on Definition \ref{UL_item}, we quantify the total amount of information loss for an anonymized dataset $\tilde{\mathcal{D}}$ as follows.

\begin{definition}\label{UL_dataset}\hspace{+1mm}\textsc{(Utility Loss for an Anonymized Dataset).}
The Utility Loss for an anonymized dataset $\tilde{\mathcal{D}}$ is given by
$$\mbox{\textit{UL}}(\tilde{\mathcal{D}})=\displaystyle\sum_{\forall \tilde{i_m} \neq \emptyset} \mbox{\textit{UL}}(\tilde{i_m})+\displaystyle\sum_{\forall i_m \in \mathcal{S}}Y(i_m)$$
\noindent where $Y:\mathcal{I}\rightarrow \Re$ is a function assigning a penalty to each suppressed item $i_m$ from $\mathcal{D}$.
\end{definition}

The above definition captures data utility loss caused by both generalization and suppression. Specifically, for suppression, similar to \cite{142}, we
allow data owners to assign a penalty to each suppressed item, according to the perceived importance of retaining this item in the anonymized result.
For instance, each suppressed item could receive a penalty equal to its support, based on the fact that this defines the number of transactions from which it is eliminated.

\subsection{Problem Statement}

\noindent {\it Given a transactional dataset $\mathcal{D}$, a privacy constraint set $\mathcal{P}$, a utility constraint set $\mathcal{U}$, and parameters $k, s$,
construct an anonymized version $\tilde{\mathcal{D}}$ of $\mathcal{D}$ using the set-based anonymization model such that: (1) $\mathcal{P}$ and $\mathcal{U}$ are both satisfied, and (2) the amount of utility loss $\mbox{\textit{UL}}(\tilde{\mathcal{D}})$ is minimal.}

\section{Anonymization Algorithm}\label{algorithm}

We now present COAT (COnstraint-based Anonymization of Transactions), a heuristic algorithm that solves the aforementioned problem using item generalization and suppression. Given $\mathcal{D}, \mathcal{P}, \mathcal{U}, k$ and $s$, COAT
selects a privacy constraint $p \in \mathcal{P}$, and applies
item generalizations that are specified by $\mathcal{U}$ and incur the smallest amount of information loss to satisfy $p$. When $p$ cannot be satisfied by generalization, COAT suppresses the minimum number of items in $p$ to satisfy it. The process is repeated for all privacy constraints until $\mathcal{P}$ is satisfied.

\begin{algorithm}[!ht]
\small
\begin{tabbing}
\hspace{5mm} \= \hspace{3mm} \= \hspace{3mm} \= \hspace{3mm} \= \hspace{3mm} \= \\[-3mm] \kill
1. \> $\tilde{\mathcal{D}}\gets \mathcal{D}$ \\
2. \> {\bf while} $\mathcal{P}$ is not satisfied {\bf do} \\
3. \> \> find $p$ corresponding to $I$ s.t.\\
\> \> \>$I\gets\argmin\hspace{-10mm}\displaystyle\max_{\forall p_j\in \mathcal{P} :p_j\mbox{\scriptsize is not satisfied}} \left( \mbox{sup}(\displaystyle\cup_{(i_r\in p_j)}i_r,\tilde{\mathcal{D}})\displaystyle \right)$\\
4. \> \> {\bf while} $p$ is not satisfied $~\wedge~|I|>1$ {\bf do}\\
5. \> \> \> $i_m \gets \argmin\hspace{-1mm}\displaystyle\min_{\forall i_r\in I} \mbox{sup}(i_r,\tilde{\mathcal{D}})$\\
6. \> \>  \> $u_l \gets \{u_j \in \mathcal{U}~|~i_m \in u_j\}$\\
7. \> \> \> {\bf if} $|u_l|>1$ \\
8. \> \> \> \> \emph{Generalize}$\left( i_m,u_l,\mathcal{P} \right)$\\
9. \> \> \> {\bf else if} sup$(i_m,\tilde{\mathcal{D}})<k$ \\
10. \> \> \> \> \emph{Suppress}$\left( i_m,u_l,\mathcal{P}, s\right)$\\
11.\> \> {\bf if} $p$ is not satisfied $~\wedge~|I|=1$\\
12.\> \> \>{\bf  while} $p$ is not satisfied {\bf do}\\
13. \> \> \> \> $i_m \gets \argmin\hspace{-1mm}\displaystyle\min_{\forall i_r\in I} \mbox{sup}(i_r,\tilde{\mathcal{D}})$\\
14.\> \> \> \> \emph{Suppress}$\left(i_m, u_l,\mathcal{P}, s\right)$\\
15.\> {\bf return} $\tilde{\mathcal{D}}$\\[-7mm]
\end{tabbing}
\caption{~~\small COAT($\mathcal{D},\mathcal{P},\mathcal{U}, k, s$)}\label{our_algorithm}
\end{algorithm}

Pseudocode for COAT is provided in Algorithm \ref{our_algorithm}. Since the anonymized dataset $\tilde{\mathcal{D}}$ is produced by transforming items in transactions of the original dataset $\mathcal{D}$, in step $1$, we initialize $\tilde{\mathcal{D}}$ to $\mathcal{D}$.
Steps $2$ to $14$ present the main iteration of COAT, which aims to satisfy the privacy constraint set $\mathcal{P}$ (step $2$). In step $3$, COAT selects the privacy constraint $p$ that is not satisfied and corresponds to an itemset $I$ having maximum support in $\tilde{\mathcal{D}}$, since
satisfying this constraint incurs minimal distortion of $\mathcal{D}$. This is due to the fact that the minimum number of transactions in $\tilde{\mathcal{D}}$ have to be distorted to augment the support of $I$ to at least $k$.

Next, while $p$ remains unsatisfied in $\tilde{\mathcal{D}}$ and $p$ contains more than one  generalized item (step $4$), COAT performs steps $5$ to $10$.
In step $5$, the item $i_m$ from $p$ with the minimum support in $\tilde{\mathcal{D}}$ is selected to be generalized. Selecting $i_m$ this way attempts to minimize the number of generalizations required to satisfy $p$, as items with ``low'' support
need to be generalized to meet the specified $k$. Subsequently, we identify the utility constraint $u_l$ from $\mathcal{U}$ for which $i_m \in u_l$,
in order to retrieve the items that are allowed to be generalized with $i_m$ (step $6$). If at least one item apart from $i_m$ is contained in $u_l$ (step $7$),
item $i_m$ is generalized in a way that minimizes information loss (step $8$) as illustrated in Algorithm \ref{generalize_function}. Otherwise we suppress $i_m$ through a function \emph{Suppress}, given in Algorithm \ref{suppression_function}, since applying generalization to increase the support of $i_m$ to $k$ would result in violating $u_l$ (steps $9-10$).

Steps $11$ to $14$ aim to satisfy $p$ by suppressing the minimum number of items in $\tilde{\mathcal{D}}$ required to satisfy this constraint. When $I$ consists of one (generalized) item only, and $p$ is not satisfied (step $11$), we iteratively suppress items in $I$, starting with the one having the minimum support, until $p$ is met (steps $12-14$). Last, $\tilde{\mathcal{D}}$ is released (step $15$).

\begin{algorithm}[!ht]
\small
\begin{tabbing}
\hspace{5mm} \= \hspace{3mm} \= \hspace{3mm} \= \hspace{3mm} \= \hspace{3mm} \= \\[-3mm] \kill
1. \> $i_s \gets\argmin\hspace{-6mm}\displaystyle\min_{\forall i_r\in u_l\backslash\{i_m\}}\hspace{-4.5mm}\mbox{\textit{UL}}(~(i_m,i_r)~)$\\
2. \> $\tilde{i}\gets (i_m, i_s)$\\
3.\>  {\bf foreach} $p\in \mathcal{P}~:~i_m \in p \vee i_s \in p$\\
4.\> \> $p \gets (p\cup \{\tilde{i}\}) \backslash \{i_m,i_s\}$\\
5.\>  $u_l\gets (u_l \cup \{\tilde{i}\})\backslash \{i_m,i_s\}$\\
6.\> Update transactions of $\tilde{\mathcal{D}}$ based on $\tilde{i}$\\[-7mm]
\end{tabbing}
\caption{~~\small \emph{Generalize}($i_m,u_l,\mathcal{P}$)}\label{generalize_function}
\end{algorithm}
\vspace{-5mm}
\begin{algorithm}[!ht]
\small
\begin{tabbing}
\hspace{5mm} \= \hspace{3mm} \= \hspace{3mm} \= \hspace{3mm} \= \hspace{3mm} \= \\[-3mm] \kill
1.\>  $u_l\gets u_l \backslash \{i_m\}$\\
2.\>  {\bf foreach} $p\in \mathcal{P}~:~i_m \in p$\\
3.\> \> $p \gets p \backslash \{i_m\}$\\
4.\> Remove $i_m$ from all transactions of $\tilde{\mathcal{D}}$\\
5.\> {\bf if} more than $s\%$ of items are suppressed\\
6.\> \> Error: $\mathcal{U}$ is violated\\[-7mm]
\end{tabbing}
\caption{~~\small \emph{Suppress}($i_m,u_l,\mathcal{P}, s$)}\label{suppression_function}
\end{algorithm}

Algorithms \ref{generalize_function} and \ref{suppression_function} indicate how COAT performs generalization and suppression respectively. Each of these operations involves updating the privacy and utility constraint sets $\mathcal{P}$ and $\mathcal{U}$, as well as selected transactions of $\tilde{\mathcal{D}}$.

Specifically, \emph{Generalize} (Algorithm \ref{generalize_function}) operates as follows. In step $1$, it identifies the item $i_s$ that can be generalized together with $i_m$ in a way that incurs the least possible information loss according to the $\mbox{\textit{UL}}$ measure. Step $2$ performs the mapping of the two items to a common generalized item $\tilde{i}$. Following that, steps $3-5$ update the privacy and the utility constraints to reflect this generalization. Finally, the transactions of $\tilde{\mathcal{D}}$ that supported any of $i_m, i_s$ are updated to support the generalized item $\tilde{i}$ instead.

\emph{Suppress} (Algorithm \ref{suppression_function}) involves removing an item $i_m$ from the privacy and utility constraint sets (steps $1-3$), and the transactions supporting it in $\tilde{\mathcal{D}}$ (step $4$). Finally, it checks whether the imposed suppression threshold
$s$ has been surpassed (step $5$). This happens when
utility constraints are overly restrictive (e.g., they require all items to remain intact in the anonymized dataset) and a ``low'' suppression
threshold is used. In this case, data owners are notified that the utility constraint set $\mathcal{U}$ has been violated and the anonymization process terminates (step $6$).

\begin{example}
We apply COAT on the dataset $\mathcal{D}$ of Fig. \ref{original_data}, using
the constraints of Figs. \ref{original_data_pc} and \ref{original_data_uc}, $k=5$, and $s=15\%$. Since $\mathcal{P}$ contains $p_1, p_2$ whose itemsets are equally supported in $\mathcal{D}$, COAT arbitrarily considers $p_1=\{a,b,c\}$. Then, it selects $b$, which has the minimum support among $a, b$ and $c$, and generalizes it together with $a$, as required by $u_1$. This increases the support of $(a,b)c$ to $7$, satisfying $p_1$. Subsequently, COAT considers $p_2=\{d,e,f,g,h\}$. Item $d$ is minimally supported among the items of $p_2$, thus it is considered for generalization. However, $d$ cannot be generalized due to $u_3$, and it is suppressed, since its support is below $k$. After suppressing $d$, $p_2$ is still not met. Since in $p_2$, both $g,h$ have minimum support, $g$ is arbitrarily selected to be generalized. Item $g$ can be generalized with any of $e, f$ or $h$, but it is generalized with $h$, since $(g,h)$ incurs the minimum information loss. This satisfies $p_2$ and $\mathcal{P}$ is now satisfied. $\mathcal{U}$ is also satisfied as shown in Example \ref{last_ex}.
\end{example}

\section{Specifying Privacy and Utility Constraints}\label{up_specification}

The notions of privacy and utility constraints, which
reflect itemsets deemed as potentially linkable and important for intended data analysis tasks respectively, are central to our anonymization approach.
Our constraint specification framework allows data owners to formulate detailed constraints based on their specific privacy and utility requirements, which are given as
input to COAT. However, acknowledging that constraint specification may be challenging for data owners who lack domain knowledge, we present simple methods
that aim to help such data owners formulate constraints.

Section \ref{minepc} discusses our \emph{Privacy constraint set generation} (\emph{Pgen}) algorithm that constructs a privacy constraint set automatically, assuming that attackers can use any part of any transaction to link published data to individuals. \emph{Pgen} works by searching the original dataset for itemsets with ``low'' support, each of which is treated as potentially linkable and is modeled as a privacy constraint. Although the resultant privacy constraint set corresponds to a stringent privacy policy, we believe that adopting this policy is a safe choice when data owners are unable to specify which items are potentially linkable. Section \ref{setuc1} provides a recipe to reduce the effort of specifying utility constraints.

\subsection{Constructing a Privacy Constraint Set}\label{minepc}

Before presenting \emph{Pgen}, we capture the largest part of a transaction that can be used in linking attacks using Definition \ref{max_inf_itemsets}.

\begin{definition}\label{max_inf_itemsets}\hspace{+1mm}\textsc{(Maximal Infrequent Itemsets).}
Given a transactional dataset $\mathcal{D}$, and a parameter $k$, we define the set of \emph{maximal infrequent} itemsets
in $\mathcal{D}$ as those itemsets that have a support in the interval $(0,k)$ in $\mathcal{D}$, and none of their proper supersets is
supported in $\mathcal{D}$.
\end{definition}

Example \ref{latticeexample} illustrates the above definition.

\begin{example}\label{latticeexample}
Consider a dataset comprised of the last three transactions of the dataset of Fig. \ref{original_data} (associated with the itemsets $\{a,c,f\}$, $\{a,c\}$ and $\{b,h\}$ respectively), and that  $k$ is set to $2$. The lattice of itemsets in this dataset is illustrated in Fig. \ref{examplelat}, in which the support of each supported itemset is shown next to it. As can be seen, the set of \emph{maximal infrequent} itemsets in this dataset contains only $acf$ and $bh$, as each of these itemsets is supported in the
dataset and all of its proper supersets have a support of zero.
\end{example}

\begin{figure}[!ht]
\scriptsize\centering
\includegraphics[width=0.99\columnwidth]{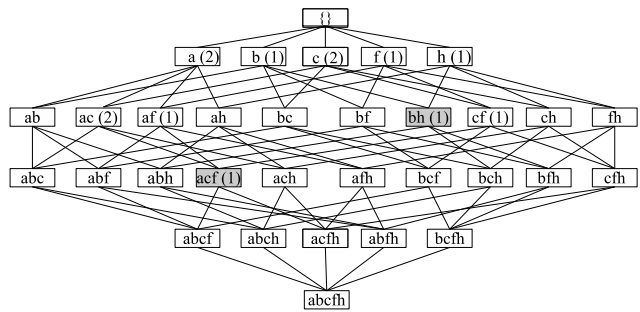}
\vspace{-3mm}
\caption{An example of \emph{maximal infrequent} itemsets}\label{examplelat}
\end{figure}

Given a transactional dataset $\mathcal{D}$, and a parameter $k$, \emph{Pgen} constructs
a privacy constraint set $\mathcal{P}$ that contains all the maximal infrequent itemsets in $\mathcal{D}$.
As mentioned above, the generated $\mathcal{P}$ can be given as input to COAT to ensure that anonymized data can prevent
linking attacks based on any part of any transaction in $\mathcal{D}$. The pseudocode of \emph{Pgen} is provided in Algorithm \ref{pgen}.

\begin{algorithm}[!ht]
\small
\begin{tabbing}
\hspace{5mm} \= \hspace{3mm} \= \hspace{3mm} \= \hspace{3mm} \= \hspace{3mm} \= \\[-3mm] \kill
1.\>  $\mathcal{P} \gets $  sorted transactions of $\mathcal{D}$ with respect to their size\\
  \> \> in decreasing order\\
2.\>  {\bf foreach} $T_r \in \mathcal{P}, r=1,...,N$\\
3.\> \> {\bf foreach} $T_s \in \mathcal{P}, s=(r+1),...,N$\\
4.\> \> \> {\bf if} $T_s \subseteq T_r$ \\
  \> \> \> \> Remove $T_s$ from $\mathcal{P}$\\
5. \> \> Itemset $I \gets T_r$\\
6.\> \> {\bf if} $sup(I,\mathcal{D})\geq k$ \\
7.\> \> \> Remove $T_r$ from $\mathcal{P}$\\
8.\> {\bf return} $\mathcal{P}$\\[-7mm]
\end{tabbing}
\caption{~~\small \emph{Pgen}($\mathcal{D},k$)}\label{pgen}
\end{algorithm}

\emph{Pgen} starts by creating a privacy constraint set $\mathcal{P}$, which is initialized by the set of transactions of the original dataset $\mathcal{D}$,
each of which is treated as a privacy constraint. Clearly, this set may contain redundant itemsets, which would result in an unnecessary computational overhead if used as input to COAT. This is because COAT works by satisfying each privacy constraint iteratively. Therefore, \emph{Pgen} implements a simple pruning strategy that removes redundant privacy constraints from $\mathcal{P}$ to reduce its size without affecting the privacy guarantees provided when $\mathcal{P}$ is satisfied.

The first step of this strategy is to populate $\mathcal{P}$ with the set of transactions of $\mathcal{D}$, sorted in terms of decreasing size. Subsequently, in steps $2$-$4$, transactions ($T_s$) that are subsets of other transactions ($T_r$) are identified and removed from $\mathcal{P}$. This is because these transactions cannot correspond to maximal infrequent itemsets, according to Definition \ref{max_inf_itemsets}.
Next, steps $6$ and $7$ ensure that privacy constraints that do not require protection (i.e. itemsets induced by transactions having a support of at least $k$ in $\mathcal{D}$) are not included in $\mathcal{P}$. Finally, $\mathcal{P}$, which contains the set of maximal infrequent itemsets in $\mathcal{D}$, is returned in step $8$. This
privacy constraint set can be given as input to COAT. Notice that \emph{Pgen} has a quadratic run-time complexity, as it involves sorting, pairwise comparison, and support computation for transactions. To illustrate how \emph{Pgen} works, we provide Example \ref{exampleseven}.

\begin{example}\label{exampleseven}
Consider applying \emph{Pgen} on the dataset of Example \ref{latticeexample}, using $k=2$. This results in
initializing $\mathcal{P}$ with three privacy constraints $p_1=\{a,c,f\}, p_2=\{a,c\}$ and $p_3=\{b,h\}$ (one for each transaction), which are sorted in terms
of decreasing size. Subsequently, $p_2$ is removed from $\mathcal{P}$, because $\{a,c\}$ is a subset of $p_1=\{a,c,f\}$. Next, \emph{Pgen} checks the support of
$p_1$, and so retains it in $\mathcal{P}$ as the support of $p_1$ in this dataset is $1\in(0,2)$. In the final iteration, \emph{Pgen} examines $p_3$, and retains it in $\mathcal{P}$ for the same reason. Thus, \emph{Pgen} returns $\mathcal{P}=\{p_1,p_3\}$.
\end{example}

\subsection{Formulating a Utility Constraint Set}\label{setuc1}

While privacy constraints can be extracted automatically as discussed above,
this is difficult for utility constraints, because they model application-specific data analysis requirements.
Thus, we assume that data owners are able to specify utility constraints to avoid distorting
itemsets that need to be used in intended applications.

When interested in generating anonymized data that allows the counts of \emph{aggregate concepts} to be accurately determined, for example,
data users can formulate a utility constraint for each of these concepts (itemsets), as explained in Section \ref{utility_constraints_section}. These itemsets may be selected with the help of hierarchies or ontologies, which are specified by domain experts or constructed in an automated fashion \cite{punera}. A utility constraint
containing the remaining items (i.e., those not contained in the selected itemsets) should also be specified to ensure that the utility constraint set is a partition of $\mathcal{I}$ (see Definition \ref{utility_constraint_set_def}).

We emphasize that the way \emph{all} items are generalized
is governed by the utility loss function (see Definition \ref{UL_item}), which forces semantically related items to be generalized together.
Example \ref{exampleeight} illustrates how utility constraints may be specified.

\begin{example}\label{exampleeight}
Consider that the dataset of Fig. \ref{original_data} has to be anonymized to support the study of Example \ref{diabetesexample} in which
the number of patients diagnosed with
\emph{diabetes} (i.e., transactions having $a$, $b$, or $ab$) needs to be accurately computed.
To support this study, the hospital can specify
a utility constraint $\{a,b\}$, and include all the remaining diagnosis codes in a second constraint $\{c,d,e,f,g,h\}$.
\end{example}

\vspace{+3mm}
\section{Experimental Evaluation}\label{experiments}

In this section, we compare COAT to Apriori \cite{140} using four series of experiments. In the first series, we compare the amount of information loss the algorithms incur to achieve $k^m$-anonymity. The second and third series of experiments examine whether the algorithms can meet detailed privacy and utility requirements without harming data utility, and the last series evaluate their efficiency.

\subsection{Experimental setup and metrics}\label{expsetupandmetr}

We use two real-world transactional datasets, \emph{BMS-WebView-1} (\emph{BMS1}) and \emph{BMS-WebView-2} (\emph{BMS2}), which contain click-stream data from two e-commerce sites. The datasets have been used in evaluating prior work \cite{140,141} and also as benchmarks in the 2000 KDD-Cup competition. Table \ref{data_desc} summarizes their characteristics.

\begin{table}[!ht]
\scriptsize\centering
\begin{tabular}{|l|l|l|l|l|}
\hline {\bf Dataset} & {\bf $N$} & $ |\mathcal{I}|$ & {\bf Max. $|T|$} & {\bf Avg. $|T|$} \tabularnewline\hline
 \emph{BMS-1} & 59602 & 497 & 267 & 2.5 \tabularnewline \hline
 \emph{BMS-2} & 77512 & 3340 & 161 & 5.0 \tabularnewline \hline
\end{tabular}
\vspace{-3mm}
\caption{Description of used datasets}\label{data_desc}
\end{table}
\vspace{-2mm}
To ensure a fair comparison between COAT and Apriori, we configured the latter with the same hierarchies as in \cite{140} and set the weights $w(\tilde{i_m})$ used in COAT based on a notion of semantic distance computed according to the aforementioned hierarchies \cite{14}. We did not compare our approach to the two other algorithms proposed by the authors of Apriori in \cite{140}. This is because these algorithms have been shown to be comparable to Apriori in terms of effectiveness, while they are only applicable to datasets with a small domain of less than $50$ items \cite{140} (typically, transactional datasets have a domain size in the order of hundreds or thousands). We also did not compare our approach to those of \cite{142} and \cite{141}, since these approaches require a fixed categorization of items into potentially linkable and sensitive, a classification that is not applicable to the problem we tackle.

Both COAT and Apriori were implemented in C++. All experiments were performed on an Intel 2.8GHz machine equipped with 4GB of RAM.

To quantify information loss, we considered aggregate query answering as an indicative application, and measured the accuracy of answering workloads of queries on anonymized data produced by the tested algorithms. This is a widely-used approach to characterize information loss \cite{8,79,141} and is invariant of the way tested algorithms work. Consider the COUNT() query $Q$ shown in Fig. \ref{query_example}. We obtain an accurate answer $a(Q)$ for $Q$ when this query is applied to original data $\mathcal{D}$, but not in the case of generalized data $\tilde{\mathcal{D}}$, as original items from $\mathcal{I}$ are mapped to generalized ones in $\tilde{\mathcal{I}}$. Therefore, we can only estimate the answer for $Q$.

\begin{figure}[!ht]
\small\centering
\begin{tabular}{ll}
Q:& \texttt{SELECT COUNT(}$T_n$\texttt{ (or }$\tilde{T_n}$\texttt{ ))}\tabularnewline
&\texttt{FROM  } $\mathcal{D}$ \texttt{(or }$\tilde{\mathcal{D}}$\texttt{)}\tabularnewline
&\texttt{WHERE }$i_1 \in T_n \wedge i_2 \in T_n \wedge \ldots \wedge i_q \in T_n$ \tabularnewline
&\texttt{~~~~~~(or~}$\Phi(i_1)\in\tilde{T_n}\wedge  ... \wedge\Phi(i_q)\in\tilde{T_n}$\texttt{)}\tabularnewline
\end{tabular}
\vspace{-3mm}
\caption{COUNT() query example}\label{query_example}
\end{figure}
\vspace{-2mm}
This estimation can be performed by computing the probability a transaction of $\tilde{\mathcal{D}}$ satisfies $Q$, as $\Pi_{r=1}^{q}p(i_r)$, where $p(i_r)$ is the probability of mapping an item $i_r$, $r=1,...,q$, in the query to a generalized item $\tilde{i_m}$, assuming that $\tilde{i_m}$ can include any possible subset of the items mapped to it with equal probability, and that there are no correlations among generalized items \cite{8,79,141}. An estimated answer $e(Q)$ of $Q$ is then derived by summing the corresponding probabilities across all transactions $\tilde{T_n}$ of $\tilde{\mathcal{D}}$.

To measure the accuracy of estimating $Q$, we use the \emph{Relative Error} (\emph{RE}) measure computed as \emph{RE}($Q$)$=|a(Q)-e(Q)|/a(Q)$. Given a workload of queries, the Average Relative Error (\emph{AvgRE}) for all queries, reflects how well anonymized data supports query answering \cite{8,79}. To measure \emph{AvgRE}, we constructed workloads comprised of $1000$ COUNT() queries similar to $Q$. The items participating in these queries were selected randomly from the generalized items.

\subsection{Achieving $\mathbf{k^m}$-anonymity}\label{achieving_km}

In this section, we empirically confirm that COAT not only satisfies $k^m$-anonymity, but does so with up to $9$ times less information loss than Apriori. Specifically, we ran COAT by including all $m$-itemsets in the privacy constraint set $\mathcal{P}$, considering a utility
constraint set $\mathcal{U}$ that contains all items (effectively allowing all possible generalizations), and setting $s=0.5\%$. Both algorithms used the same $k$ and $m$ values. The  results with respect to \emph{AvgRE} and \textit{UL} measures are summarized in Sections \ref{achieving_km_re} and \ref{achieving_km_um} respectively. In these experiments COAT did not suppress any items.

\subsubsection{Capturing data utility using AvgRE}\label{achieving_km_re}

Figs. \ref{H_Q2_M1} and \ref{J_Q3} report \emph{AvgRE} scores for \emph{BMS1}, where the number of items $q$ included in a query was $1$ and $3$
respectively, $m$ was set to $2$, and $k$ was selected over the range $[2,50]$. As expected, increasing $k$ induced more information loss due to the utility/privacy trade-off. Increasing $q$ had a similar effect because accurately answering queries involving many items is more difficult. COAT outperformed Apriori in both cases, achieving up to $9$ times better \emph{AvgRE} scores. This is because, as $k$ increases, the recoding model of Apriori forces an increasingly large number of items to be generalized together, while the model in COAT generalizes no more items than required to protect an itemset. Similar results were achieved for \emph{BMS2} (omitted for brevity).

\begin{figure}[!ht]
\subfigcapskip=-5pt
\subfigure[]
{
\includegraphics[width=0.47\columnwidth]{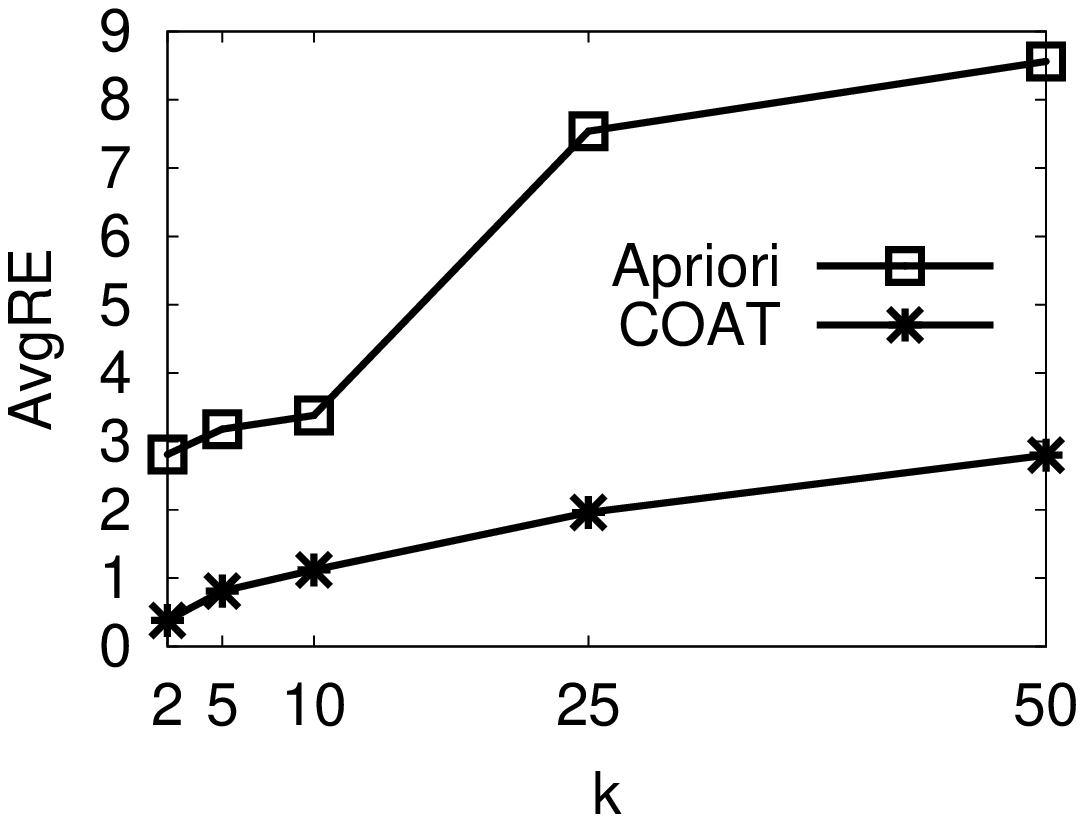}\label{H_Q2_M1}
}
\subfigcapskip=-5pt
\subfigure[]
{
\includegraphics[width=0.47\columnwidth]{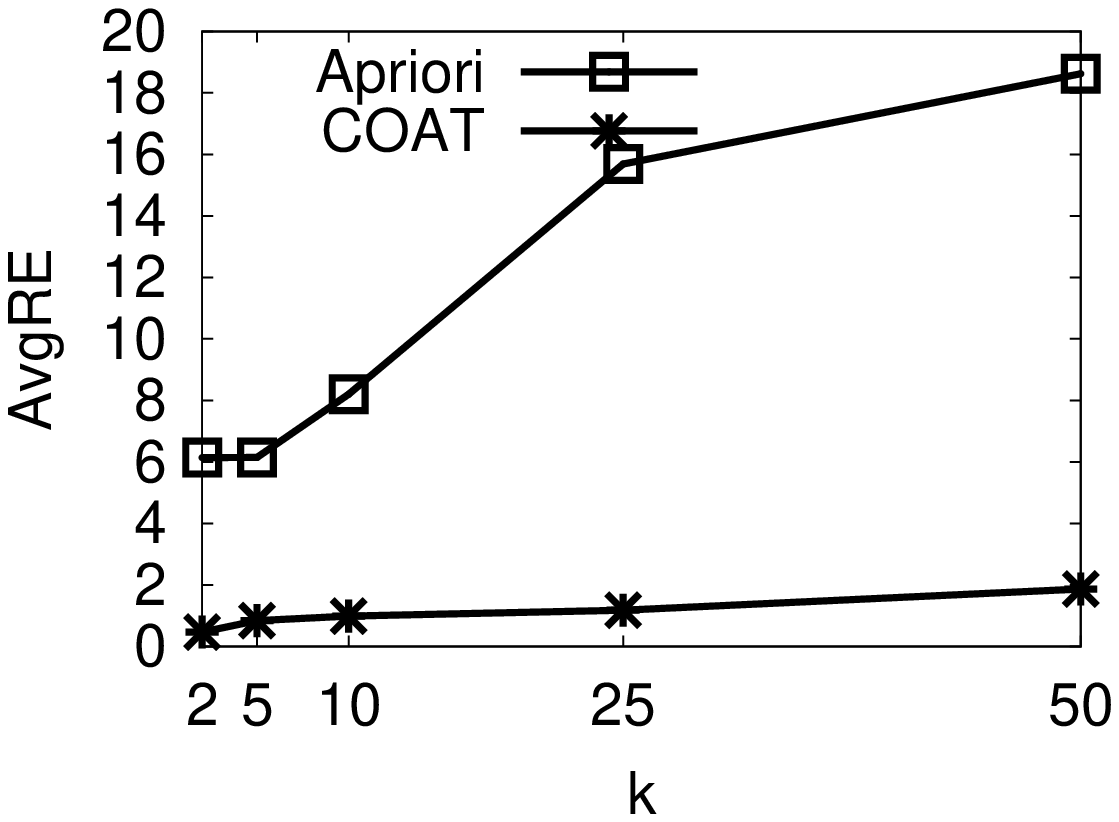}\label{J_Q3}
}
\vspace{-3mm}
\caption{AvgRE vs. $k$ for (a) $q=1$ and (b) $q=3$ in \emph{BMS1}}\label{exp1}
\end{figure}

We also executed COAT and Apriori using $k=5$, and varied $m$ between $1$ and $3$. The \emph{AvgRE} scores for \emph{BMS1} are shown in Fig. \ref{L_m123}. Apriori incurred $7$ times more information loss than COAT to anonymize \emph{BMS1} when $m=3$. This is because the number of items that Apriori forces to be generalized together to protect $m$-itemsets increases substantially as $m$ grows. The impact of this generalization strategy on data utility was even more evident in the case of \emph{BMS2}, as shown in Fig. \ref{G_K5_Q2}.

\begin{figure}[!ht]
\subfigcapskip=-5pt
\subfigure[]{
\includegraphics[width=0.47\columnwidth]{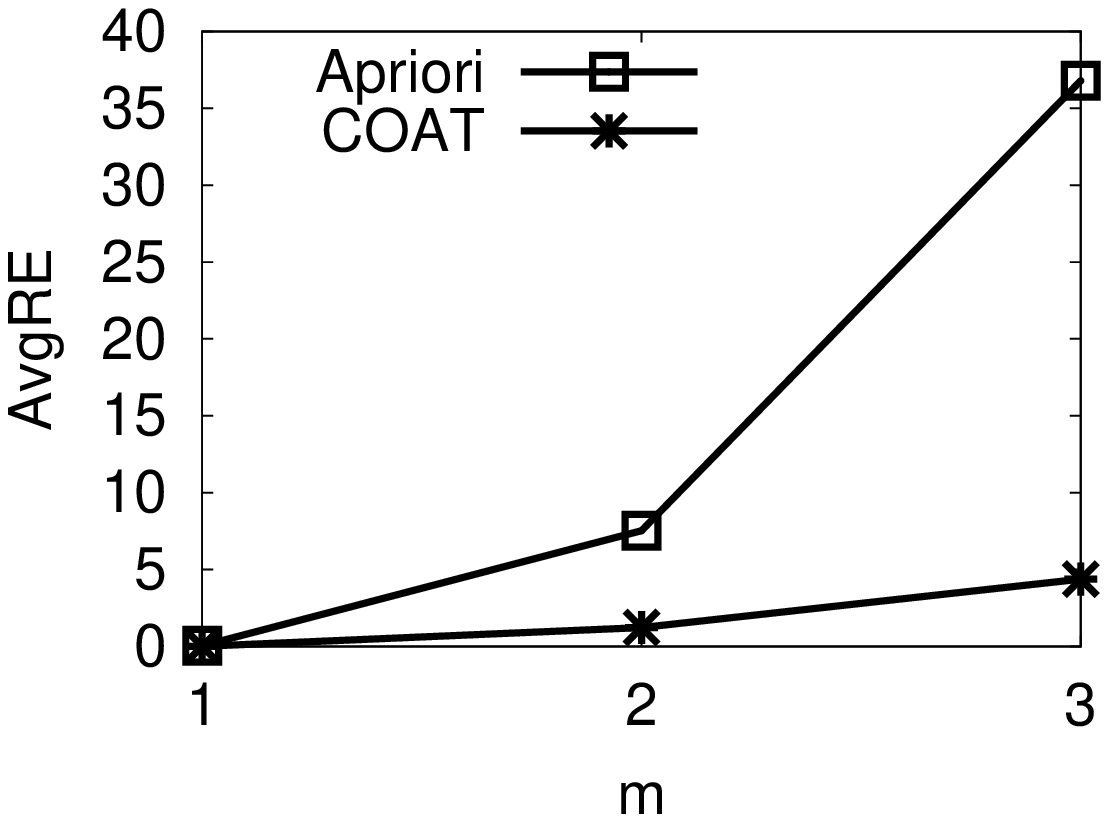}\vspace{-4mm}
\label{L_m123}
}
\subfigcapskip=-5pt
\subfigure[]{
\includegraphics[width=0.47\columnwidth]{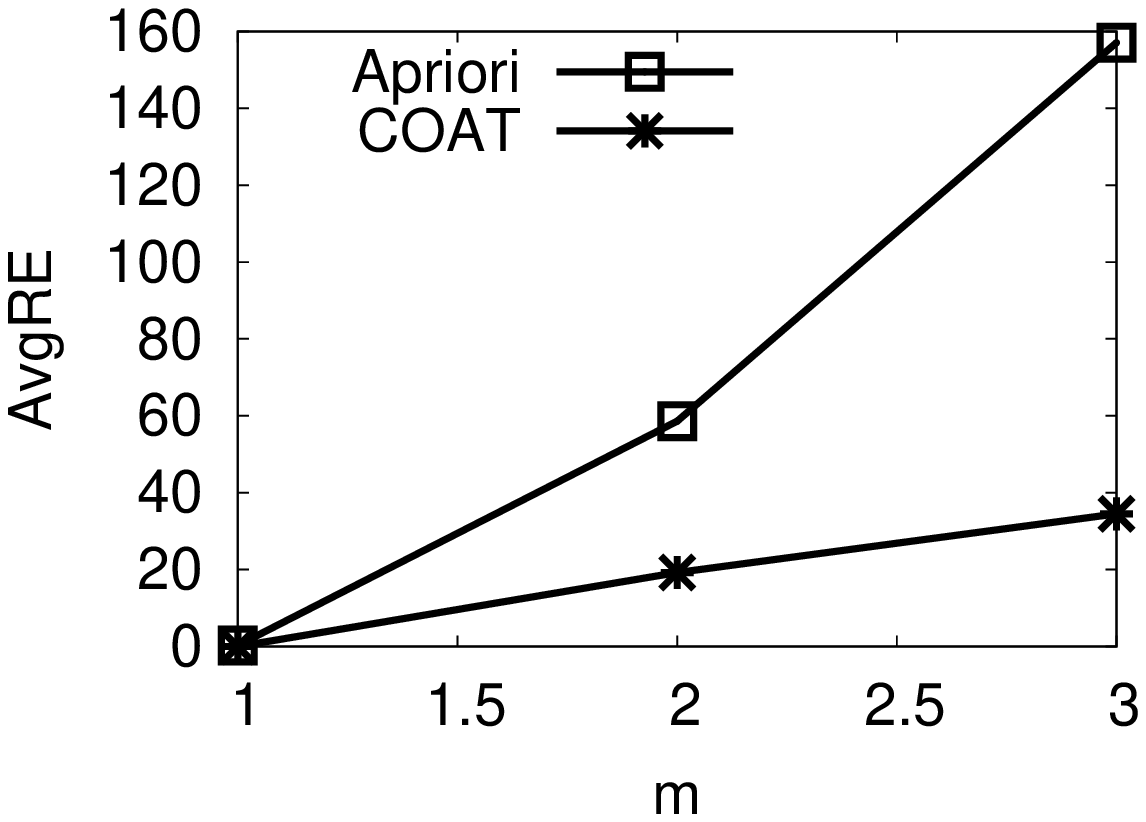}\vspace{-4mm}
\label{G_K5_Q2}
}
\vspace{-3mm}
\caption{AvgRE vs. $m$ for (a) \emph{BMS1} and (b) \emph{BMS2}}\label{exp2}
\end{figure}

\subsubsection{Capturing data utility using UL}\label{achieving_km_um}

We compared the two algorithms with respect to the \textit{UL} measure. Fig. \ref{PARTA1} shows the result of running these algorithms on \emph{BMS2} using $m=2$ and $k$ values between $2$ and $50$. Observe that Apriori was fairly insensitive to $k$ up to $25$. In fact, Apriori over-generalized itemsets by increasing their support to much larger values than $k$ due to its recoding strategy. On the other hand, COAT achieved a much better result for all tested $k$ values, due to the fine-grained generalization model it employs. We also examined how the algorithms fared with respect to \textit{UL} when $m$ varies between $1$ and $3$, and $k=5$. Observe that Apriori incurred substantially more information loss than COAT for all tested $m$ values. This again suggests that the generalization scheme of Apriori distorts data much more than our set-based anonymization strategy. Similar results were obtained for \emph{BMS1} (omitted for brevity).

We do not report additional results with respect to \textit{UL} because COAT is designed to optimize this measure, and thus outperformed Apriori in all tested cases.

\begin{figure}[!ht]
\subfigcapskip=-5pt
\subfigure[]{
\includegraphics[width=0.47\columnwidth]{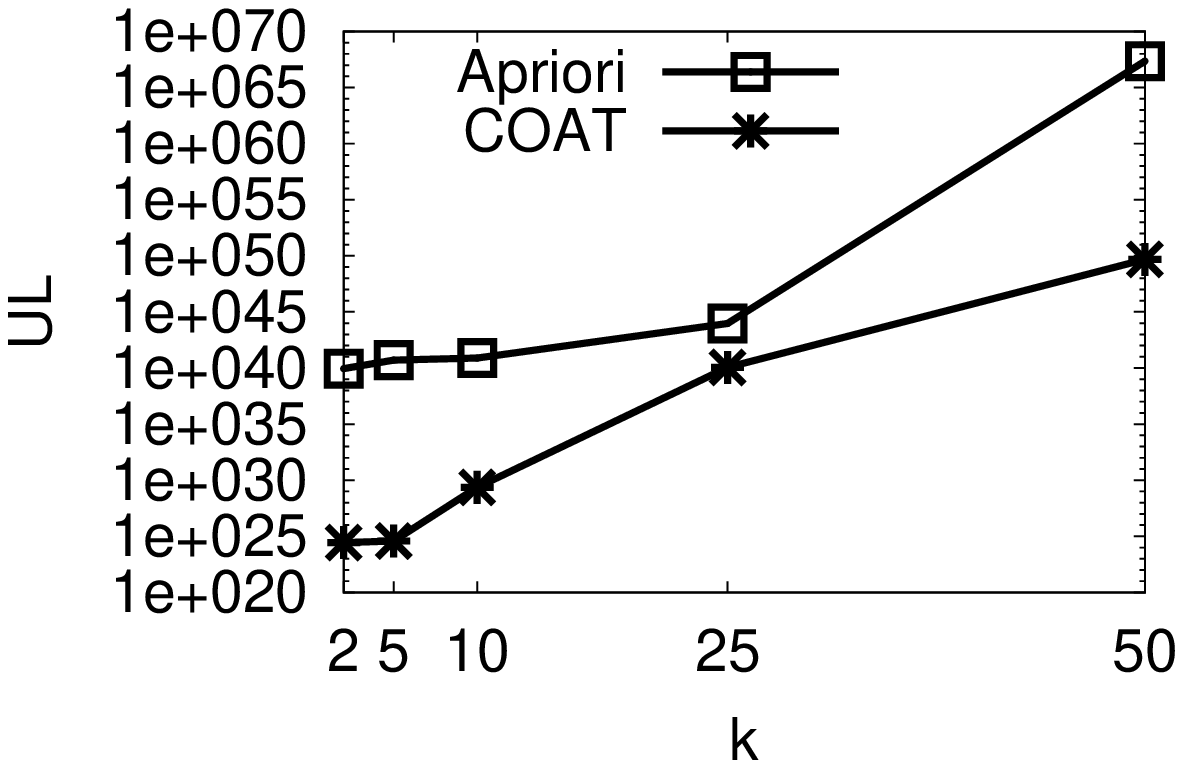}\label{PARTA1}
}
\subfigcapskip=-5pt
\subfigure[]{
\includegraphics[width=0.47\columnwidth]{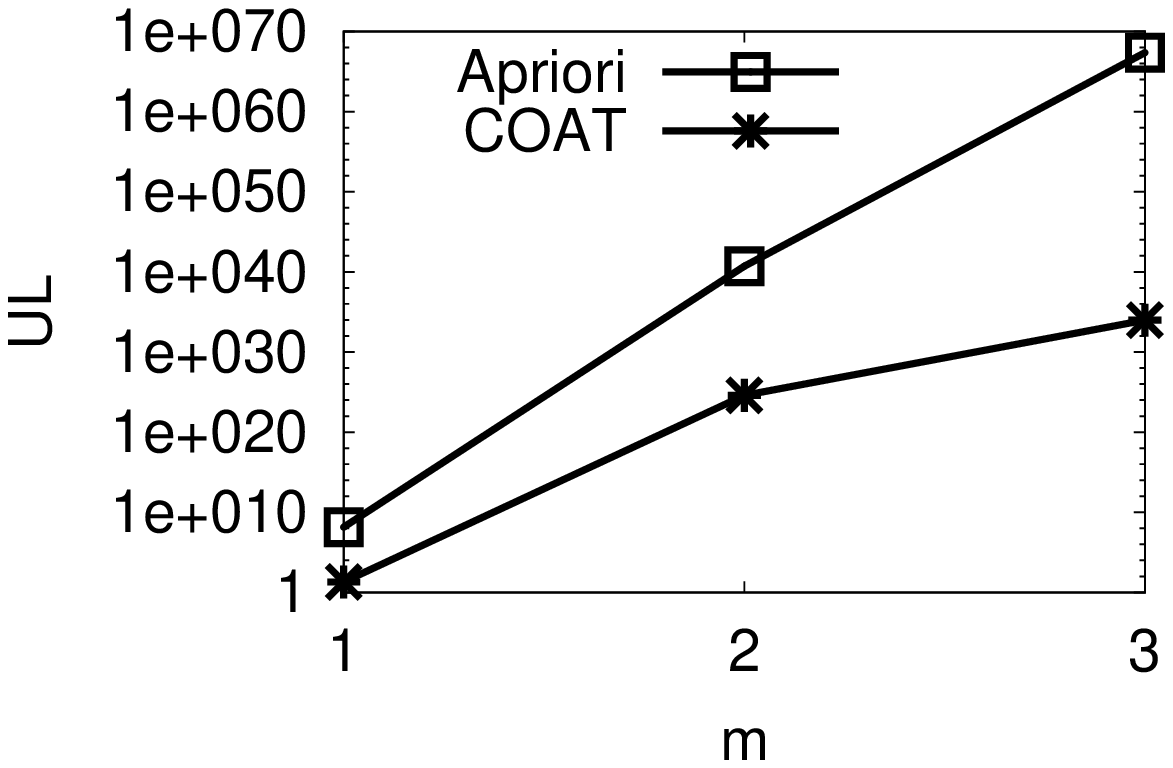}\label{PARTA2}
}
\vspace{-3mm}
\caption{(a) UL vs. $k$ and (b) UL vs. $m$ for \emph{BMS2}}\label{achieving_km_um_k}
\end{figure}

\subsection{Privacy constraints vs. data utility}\label{impact_of_pc}

In this section, we experimentally confirm that our approach can generate anonymizations with a high level of data utility through the specification of detailed privacy constraints by data owners. The impact of constraints generated by \emph{Pgen} on data utility will be examined in Section \ref{casestudy}. We constructed two types of privacy policies to simulate different privacy requirements, one in which itemsets that require protection are all of the same size and comprised of certain items from $\mathcal{I}$, and another in which such itemsets differ in size. The utility constraint set $\mathcal{U}$ used in COAT was set as in Section \ref{achieving_km}.

\subsubsection{Protecting itemsets comprised of certain items}

We considered $5$ privacy policies of the first type, \emph{PP1},\dots, \emph{PP5}, each of which assumes that all $2$-itemsets containing a certain percent of
randomly selected items require protection with $k=5$. The mappings between privacy policies and the percent of such items are
as follows: \emph{PP1}$\rightarrow 2\%$, \emph{PP2}$\rightarrow 5\%$, \emph{PP3}$\rightarrow 10\%$, \emph{PP4}$\rightarrow 25\%$, \emph{PP5}$\rightarrow 50\%$. These policies are taken into account by COAT, but not by Apriori, which needs to protect all $2$-itemsets to satisfy them.

We first studied how privacy policies affect data utility, as captured
by \emph{AvgRE}. Figs. \ref{ccc} and \ref{ddd} illustrate the results for $q=1$ and $q=3$ respectively. As expected, because it avoids unnecessarily protecting itemsets that are not specified by these policies, COAT distorted data significantly less than Apriori. This is supported by the \emph{AvgRE} scores for COAT which were significantly better than Apriori. Furthermore, as policies become more strict (i.e., require protecting itemsets induced by a larger percent of items from $\mathcal{I}$), the \emph{AvgRE} scores for COAT became slightly worse due to the utility/privacy traded-off. Nevertheless, these scores remain substantially better than that of Apriori in all cases. We repeated the same experiments for \emph{BMS2}, and obtained similar results shown in Figs. \ref{out2xx} and \ref{out6} respectively.

\begin{figure}[!ht]
\subfigcapskip=-5pt
\subfigure[]{
\includegraphics[width=0.47\columnwidth]{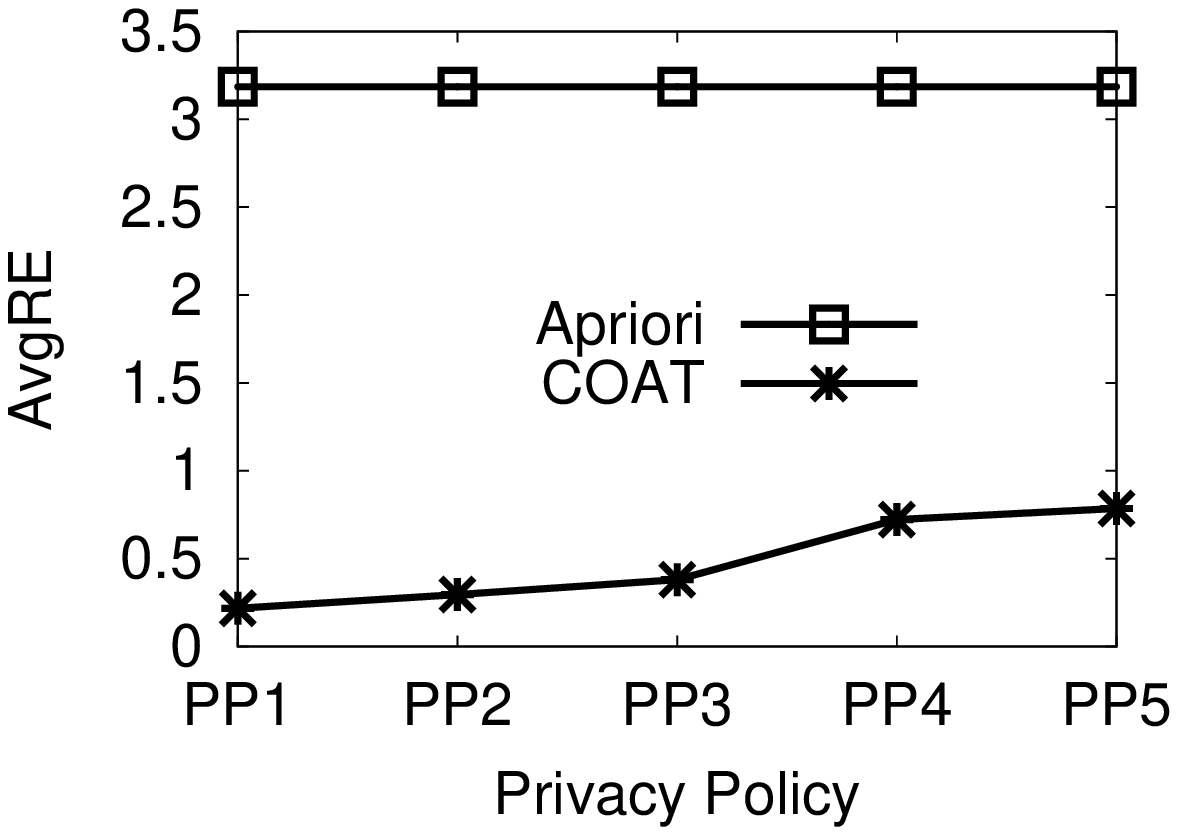}\label{ccc}
}
\subfigcapskip=-5pt
\subfigure[]{
\includegraphics[width=0.47\columnwidth]{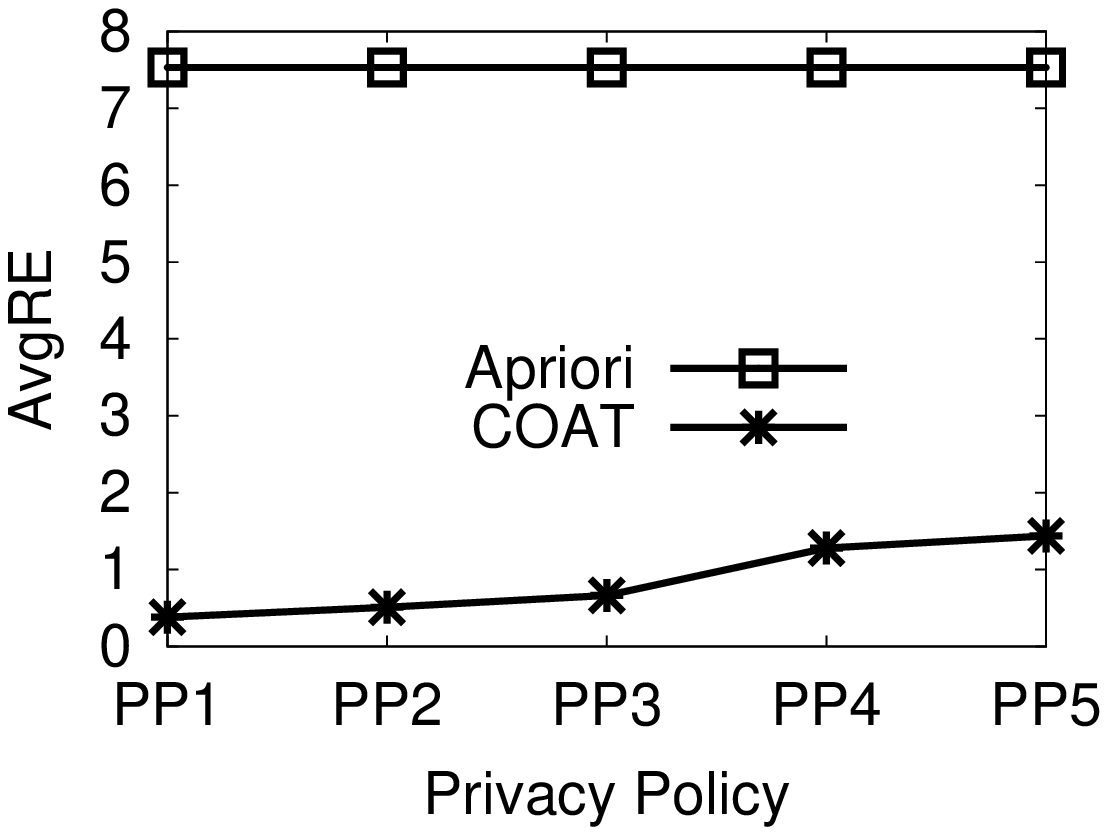}\label{ddd}
}
\subfigcapskip=-5pt
\subfigure[]{
\includegraphics[width=0.47\columnwidth]{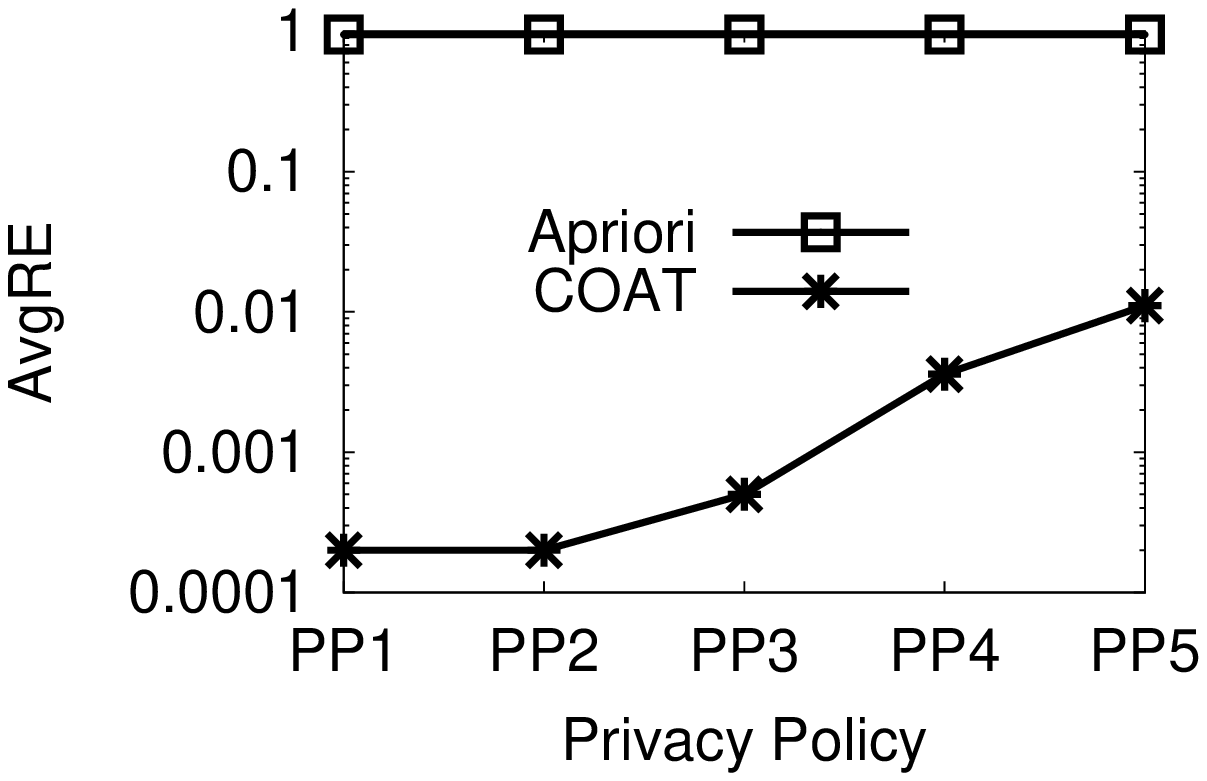}\label{out2xx}
}
\subfigcapskip=-5pt
\subfigure[]{
\includegraphics[width=0.47\columnwidth]{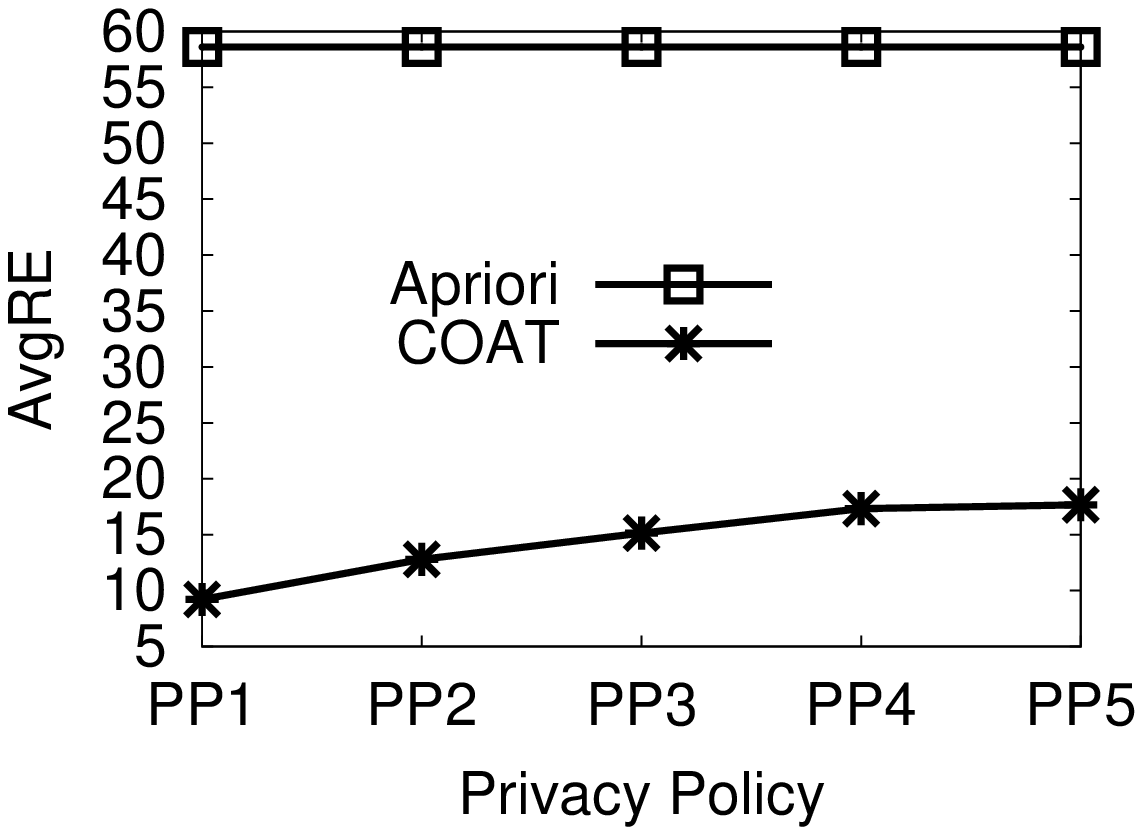}\label{out6}
}
\vspace{-3mm}
\caption{AvgRE vs. Privacy Policy (a) for $q=1$ and (b) for $q=3$ (in \emph{BMS1}), and (c) for $q=1$ and (d) for $q=3$ (in \emph{BMS2})}\label{impact_of_pc_re_bms2} \label{impact_of_pc_re_bms1}
\end{figure}

\subsubsection{Protecting itemsets of varying size}

We simulated $4$ privacy policies of the second type: \emph{PP6}, \dots, \emph{PP9}. In each of these policies, $\mathcal{P}$ consisted of itemsets with size $1$ to $4$, as shown in Table \ref{itemsets_varlen_perc}, and $k=5$. To account for these policies, Apriori had to protect all possible $4$-itemsets, and thus it was configured with $m=4$.

\begin{table}[!ht]
\scriptsize

\centering
\begin{tabular}{|l||l|l|l|l|}
\hline Privacy & $\%$ of   & $\%$ of   & $\%$ of   & $\%$ of \tabularnewline
    Policy  & items & $2$-itemsets & $3$-itemsets & $4$-itemsets \tabularnewline\hline\hline
 \emph{PP6} & $33\%$ & $33\%$ & $33\%$ & $1\%$ \tabularnewline \hline
 \emph{PP7} & $30\%$ & $30\%$ & $30\%$ & $10\%$ \tabularnewline \hline
 \emph{PP8} & $25\%$ & $25\%$ & $25\%$ & $25\%$ \tabularnewline \hline
 \emph{PP9} & $16.7\%$ & $16.7\%$ & $16.7\%$ & $50\%$ \tabularnewline \hline
\end{tabular}
\vspace{-3mm}
\caption{Summary of privacy policies \emph{PP6}, \dots , \emph{PP9}}\label{itemsets_varlen_perc}
\end{table}

The \emph{AvgRE} scores for \emph{BMS1} and \emph{BMS2}, and a workload comprised of queries with $q=2$ are depicted in Figs. \ref{XY1} and \ref{XY2} respectively.
Notice that COAT achieved better \emph{AvgRE} scores in both datasets, permitting answers to queries up to $40$ times more accurately than Apriori. This is because COAT applies generalization to each privacy constraint separately, thereby applying the minimum level of generalization required to satisfy the specified constraint.

\begin{figure}[!ht]
\subfigcapskip=-5pt
\subfigure[]{
\includegraphics[width=0.47\columnwidth]{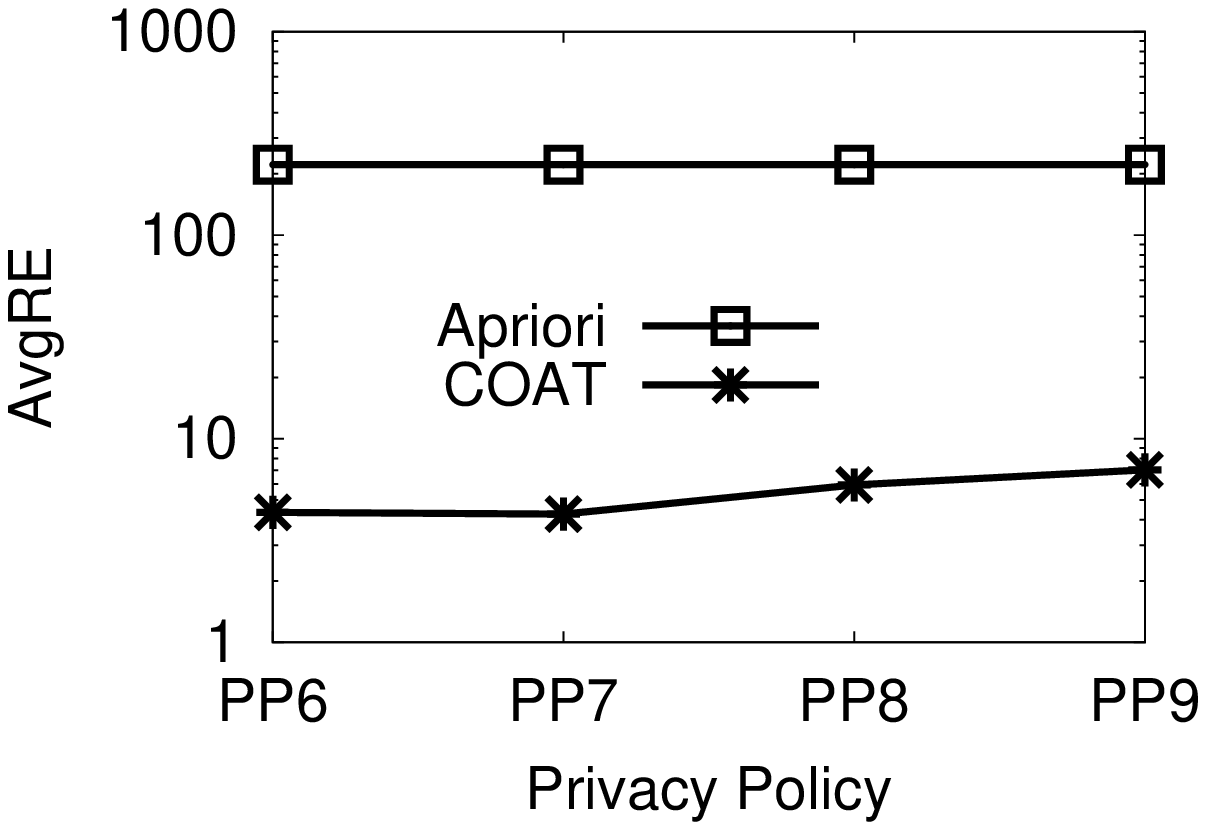}\label{XY1}
}
\subfigcapskip=-5pt
\subfigure[]{
\includegraphics[width=0.47\columnwidth]{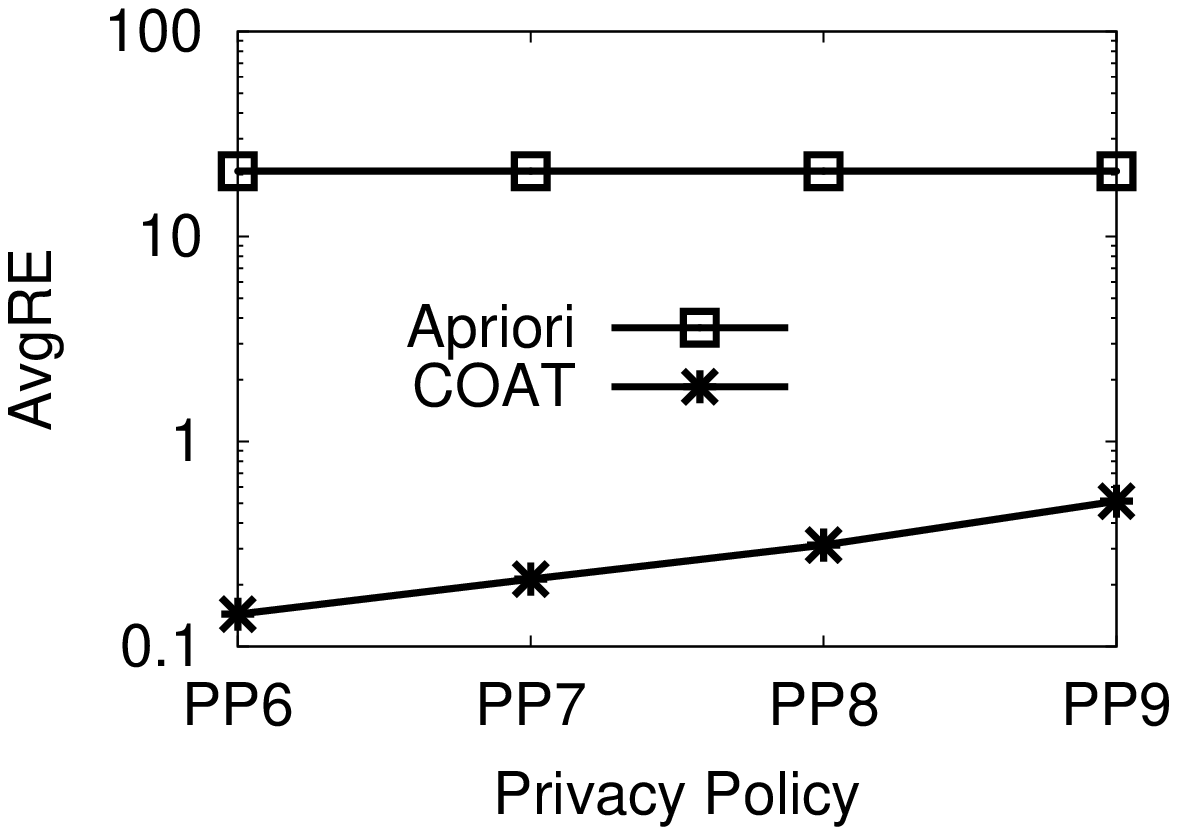}\label{XY2}
}
\vspace{-3mm}
\caption{AvgRE vs. Privacy Policy for $q=2$ (a) in \emph{BMS1} and (b) in \emph{BMS2}}\label{XYs}
\end{figure}

\subsection{Utility constraints vs. data utility}\label{impact_of_uc}

The experiments reported in this section examine the effect of utility constraints on data utility. We assumed $4$ utility policies: \emph{UP1}, \dots , \emph{UP4}. Each policy contains groups of a certain number of semantically close items (i.e., sibling items in the hierarchy). The mappings between utility policies and the size of
these groups are as follows: \emph{UP1}$\rightarrow 25$, \emph{UP2}$\rightarrow 50$, \emph{UP3}$\rightarrow 250$, and \emph{UP4}$\rightarrow 500$. Items in each group are allowed to be generalized together. Note that \emph{UP1} and \emph{UP2}, which have smaller group sizes, are very stringent and may require suppression to be satisfied. For this reason, we configured COAT with a small suppression threshold $s$ of $0.5\%$. Apriori does not address these policies because item generalization is not guided by utility constraints. Also, the privacy constraint set $\mathcal{P}$ included all $2$-itemsets and Apriori was run with $m=2$.

\emph{AvgRE} scores for a workload of queries with $q=1$ and $q=3$, are shown in Figs. \ref{UPC} and \ref{UPD} respectively, for \emph{BMS1}. Observe that COAT significantly outperformed Apriori for all utility policies. Furthermore, the number of suppressed items was very small ($0.01\%$) and occurred only in the case of \emph{UP1}. This illustrates the effectiveness of COAT, which suppresses the minimum number of items required, and only when utility constraints cannot be otherwise met. We also note that COAT was able to satisfy the imposed utility policies in all cases, unlike Apriori which was unable to meet any of them. Interestingly, the \emph{AvgRE} scores for COAT were not substantially affected by utility policies. This is because COAT applied a much lower level of generalization than that specified by the utility constraints. Similar trends were observed for \emph{BMS2} (omitted for brevity).

\begin{figure}[!ht]
\subfigcapskip=-5pt
\subfigure[]{
\includegraphics[width=0.47\columnwidth]{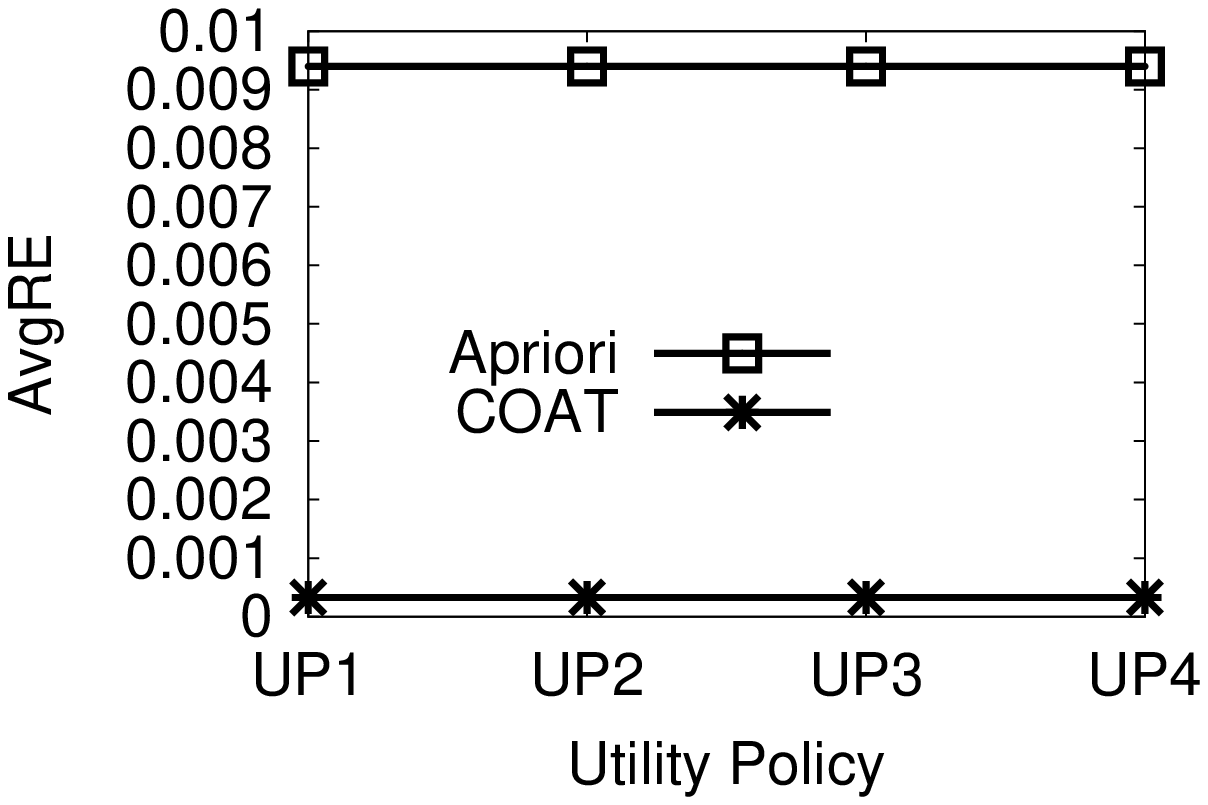}\label{UPC}
}
\subfigcapskip=-5pt
\subfigure[]{
\includegraphics[width=0.47\columnwidth]{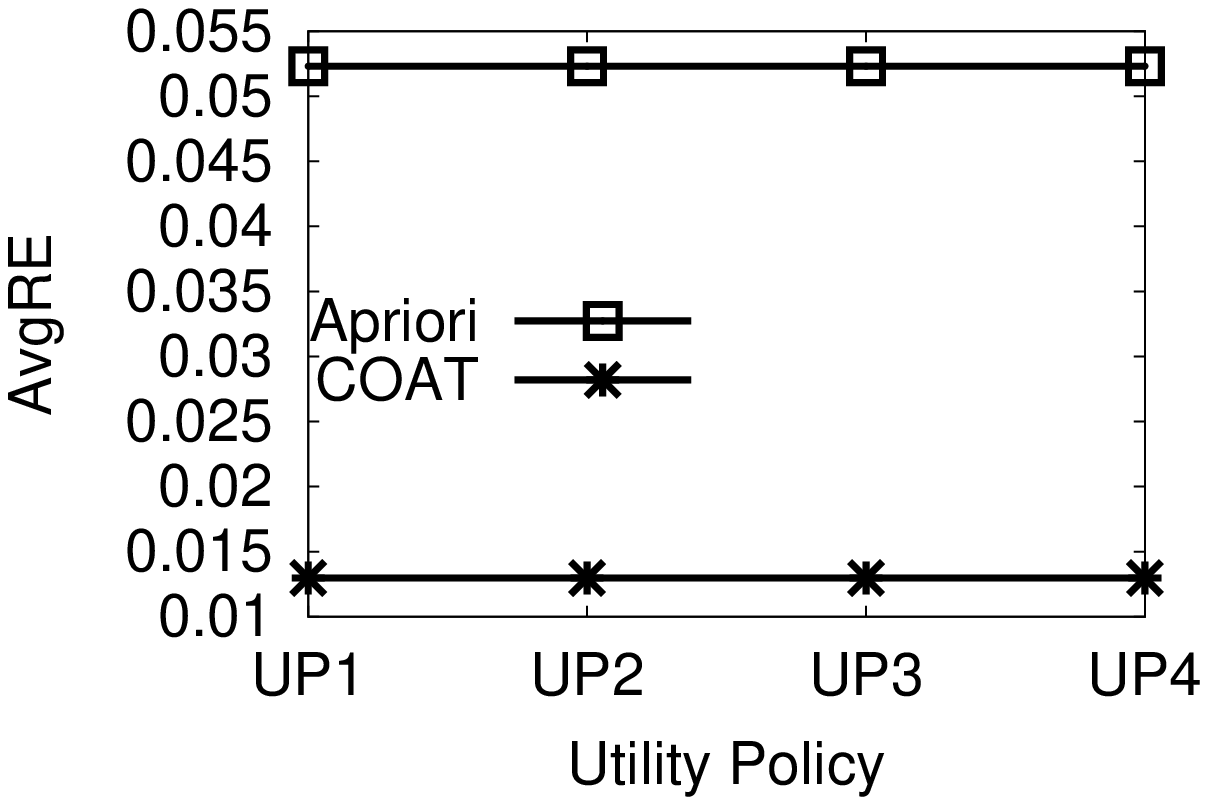}\label{UPD}
}
\vspace{-3mm}
\caption{AvgRE vs. Utility Policy (a) for $q=1$ and (b) for $q=3$ (in \emph{BMS1})}
\end{figure}


\subsection{Efficiency of Computation}

We compared COAT and Apriori in terms of efficiency.
We first examined the scalability of these algorithms with respect to dataset cardinality, by applying them on a dataset constructed by randomly selecting transactions of
\emph{BMS1}. COAT was configured by setting $\mathcal{P} $ and $\mathcal{U}$ as in Section \ref{achieving_km}, $m=2$ and $k=5$. Apriori was run with the same $k$ and $m$ values.
Fig. \ref{efD} reports run-time as cardinality varies from 1K to 50K transactions. COAT scales better than Apriori with the size of the dataset;
up to $2.5$ times faster. This is because COAT prunes the space by discarding protected itemsets as cardinality increases, whereas Apriori considers all $m$-itemsets as well their possible generalizations.

\begin{figure}[!ht]
\subfigcapskip=-5pt
\subfigure[]
{\includegraphics[width=0.48\columnwidth]{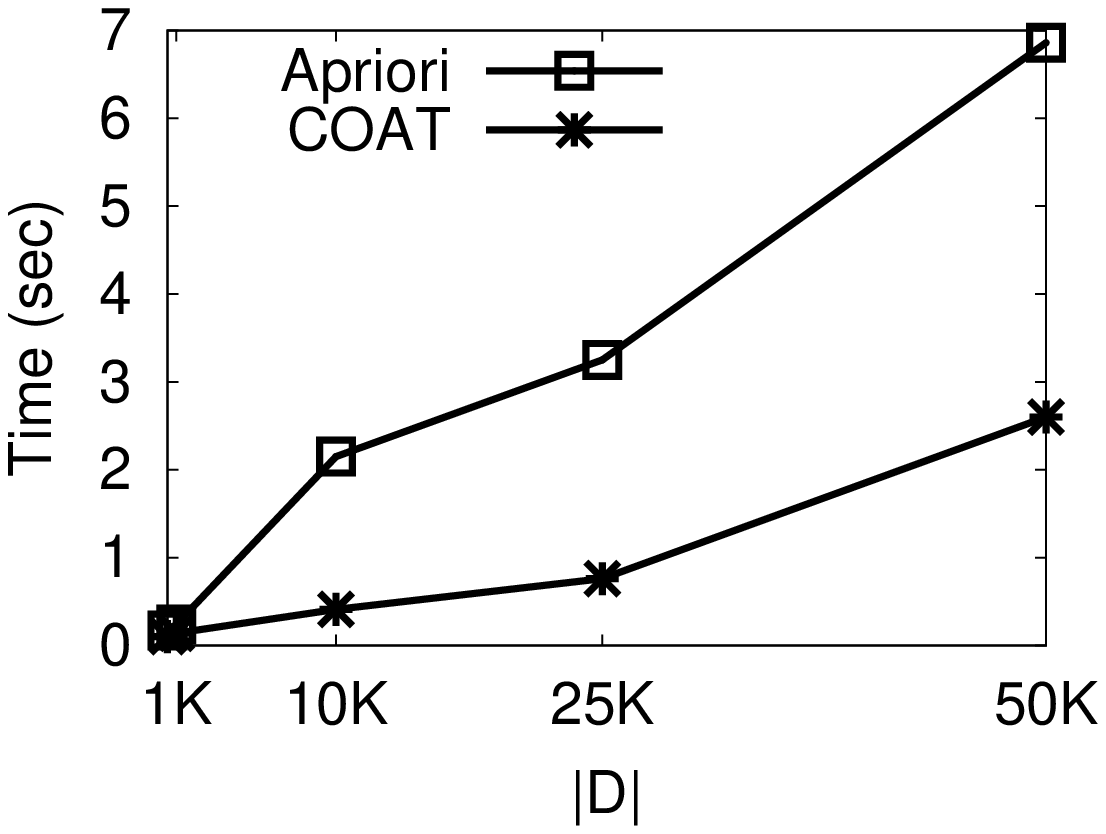}\label{efD}
}
\subfigcapskip=-5pt
\subfigure[]
{
\includegraphics[width=0.48\columnwidth]{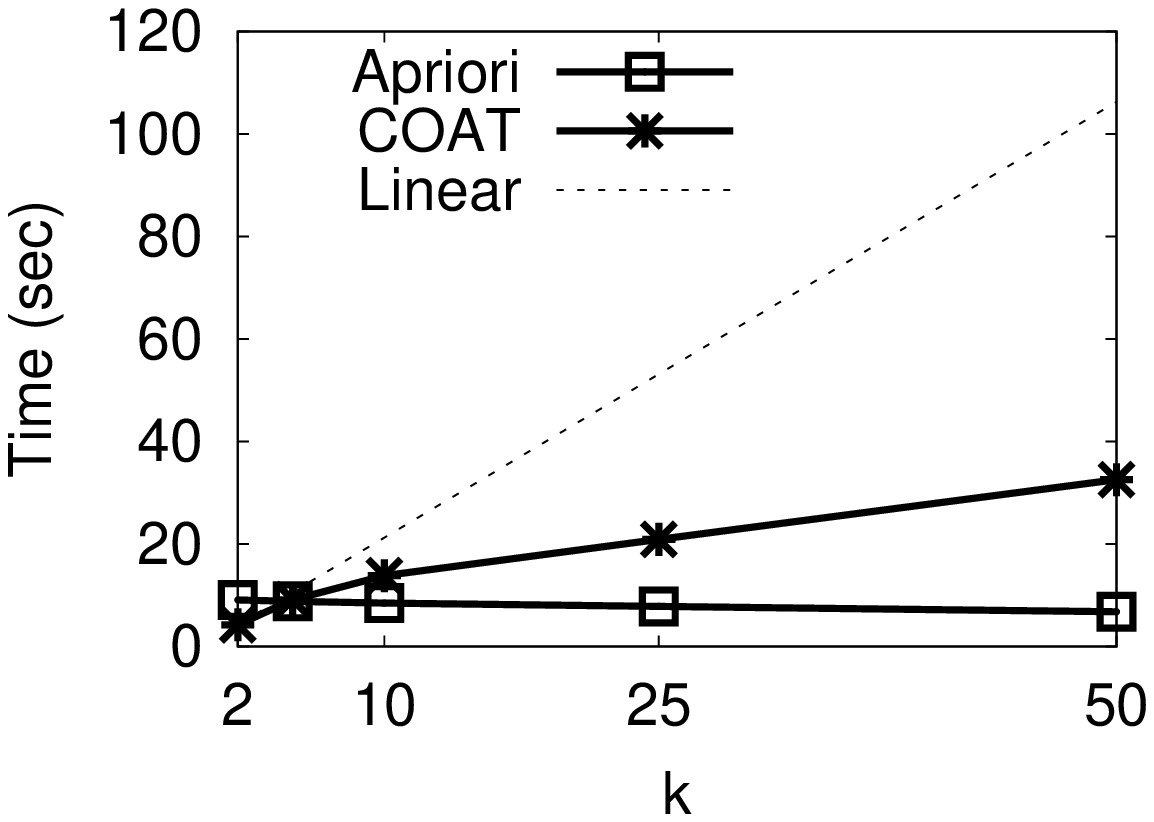}\label{efM}
}
\vspace{-3mm}
\caption{Efficiency vs. (a) dataset size $|D|$ and (b) $k$}
\end{figure}

Last, we evaluated the impact of $k$ on the run-time of COAT and Apriori on \emph{BMS1}. We used $k$ values between $2$ and $50$, and set up all other parameters as in the previous experiment. As can be seen in Fig. \ref{efM}, COAT is slightly less efficient than Apriori. This is due to the fact that COAT generalizes one item at a time, exploiting the flexibility of the set-based anonymization model. By comparison, Apriori generalizes entire subtrees of items and thus reaches the specified $k$ faster. Nevertheless, the computation cost of COAT was less than half a minute, remaining sub-linear for all testes values of $k$.

\section{Case Study: Diagnosis Codes}\label{casestudy}
In this section, we examine whether COAT can produce anonymized data that permits accurate analysis in a real-world scenario
involving detailed, application-specific utility requirements. In this context, a transactional dataset (referred to as \emph{EMR}) derived from the Electronic Medical Record system of the Vanderbilt University Medical Center \cite{Stead} needs to be published to enable certain biomedical studies. Each transaction of \emph{EMR} corresponds to a distinct patient, and contains his/her diagnosis codes in the form of \emph{ICD-9} codes \footnote{ICD-9 is the official system of assigning codes to diagnoses in the U.S.}. Table \ref{case_study_tab} summarizes the characteristics of \emph{EMR}.

\begin{table}[!ht]
\scriptsize\centering
\begin{tabular}{|l|l|l|l|l|}
\hline {\bf Dataset} & {\bf $N$} & $ |\mathcal{I}|$ & {\bf Max. $|T|$} & {\bf Avg. $|T|$} \tabularnewline\hline
 \emph{EMR} & 1336 & 5830 & 25 & 3.1 \tabularnewline \hline
\end{tabular}
\vspace{-3mm}
\caption{Description of the \emph{EMR} dataset.}\label{case_study_tab}
\end{table}

The studies that anonymized data needs to support focus on $20$ different disorders, each of which is modeled as a set of ICD-9 codes. For instance, \emph{pancreatic cancer} is represented as a set of $7$ ICD-9 codes, which correspond to different forms of \emph{pancreatic cancer} and indicate that a patient suffers from this disorder. To support these studies, the number of patients suffering from each of these disorders needs to be accurately computed. At the same time, the linkage of transactions to patients' identities based on any combination of ICD-9 codes must be prevented, because the vast majority of ICD-9 codes contained in \emph{EMR} can be found in other sources, as verified in our previous study \cite{LoukidesAMIA}.

To achieve both privacy and utility, we used our \emph{Pgen} algorithm to construct a privacy constraint set, and formulated a utility constraint set comprised of $20$ utility constraints, each for a different disorder (e.g., we specified a utility constraint that contains the $7$ ICD-9 codes corresponding to \emph{pancreatic cancer}). Furthermore, we configured COAT by setting the weights $w(\tilde{i_m})$ used in it based on a notion of semantic similarity \cite{14} computed according to the hierarchy for ICD-9 codes \footnote{http://www.cdc.gov/nchs/icd/icd9cm.htm}, and limited the maximum allowable fraction of suppressed items by setting $s$ to $0.5\%$. Apriori was also applied to anonymize \emph{EMR}, although it provides no guarantees that utility constraints are satisfied.

We evaluated the utility of anonymizations produced by both COAT and Apriori in two ways. First, we examined whether the produced anonymizations satisfied the specified utility constraint set. In fact, anonymizations constructed by COAT satisfy the latter set for all tested $k$ values (namely $2,5,10,25$ and $50$).
Thus, COAT managed to generate practically useful anonymizations that allow the number of patients having any of the $20$ disorders used in the intended studies
to be accurately computed (see Corollary \ref{uc_coro}). On the other hand, the anonymizations constructed by the Apriori algorithm did not satisfy the specified utility constraint set for any of the tested $k$ values. Therefore, we did not evaluate the data utility of anonymizations produced by Apriori using other criteria.

In addition to satisfying the specified utility constraints, it is also important to generate anonymized data with ``low'' information loss
that can support general data analysis tasks. Therefore, we investigated whether our method can generate anonymizations that are useful in aggregate query answering. To capture the amount of information loss,
we used the \emph{AvgRE} measure, discussed in Section \ref{expsetupandmetr}. \emph{AvgRE} was computed using two
different workloads referred to as \emph{W1} and \emph{W2} respectively. \emph{W1} is comprised of COUNT() queries that retrieve combinations of ICD-9 codes supported by at least $10\%$ of the transactions of \emph{EMR}. These combinations correspond to frequently co-occurring disorders (e.g., \emph{diabetes}
and \emph{hypertension}) that are important in the context of biomedical data analysis, and are different from
the $20$ disorders contained in the utility constraint set. \emph{W2} is similar to the workload considered in Section \ref{expsetupandmetr}. It is comprised of $1000$ COUNT() queries similar to the query shown in Fig. \ref{query_example}, each of which is comprised of $2$ ICD-9 codes
randomly selected among generalized items. This workload models a scenario involving anonymized data queried by users with various data analysis requirements.

\begin{figure}[!ht]
\subfigcapskip=-5pt
\subfigure[]
{
\includegraphics[width=0.47\columnwidth]{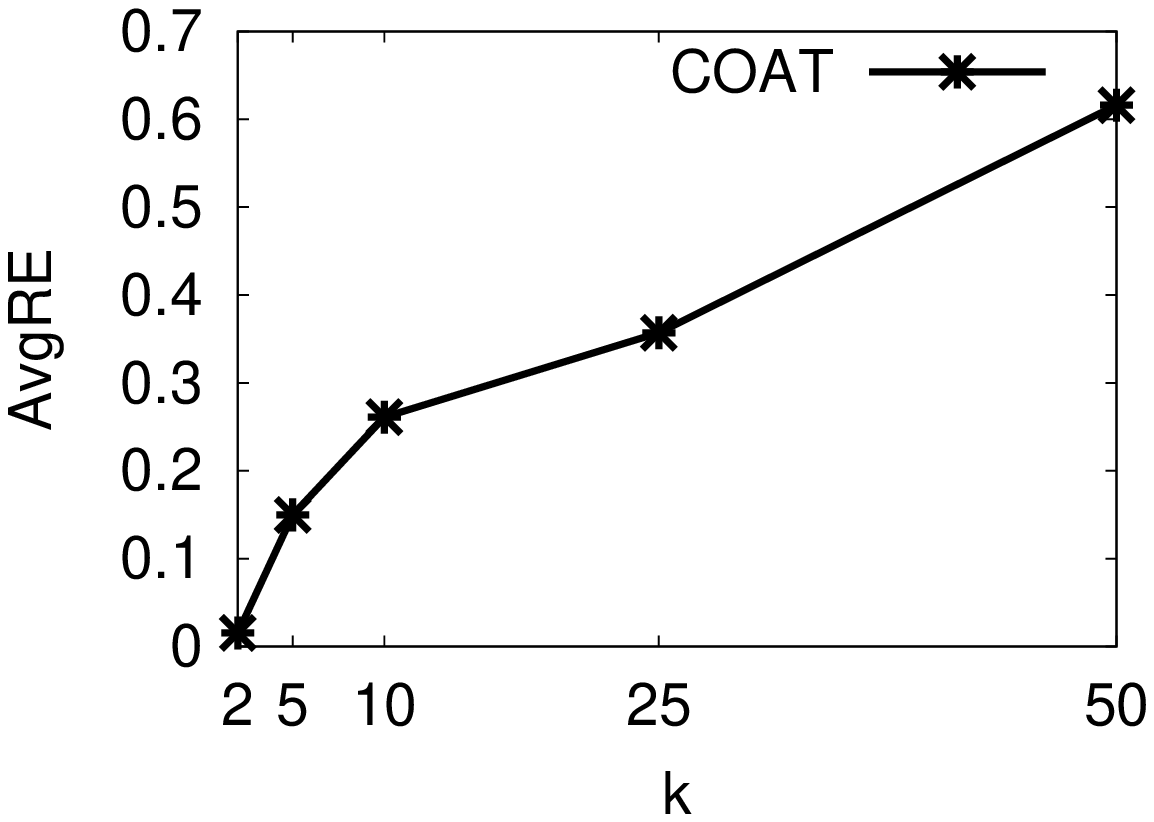}\label{case1}
}
\subfigcapskip=-5pt
\subfigure[]
{\includegraphics[width=0.47\columnwidth]{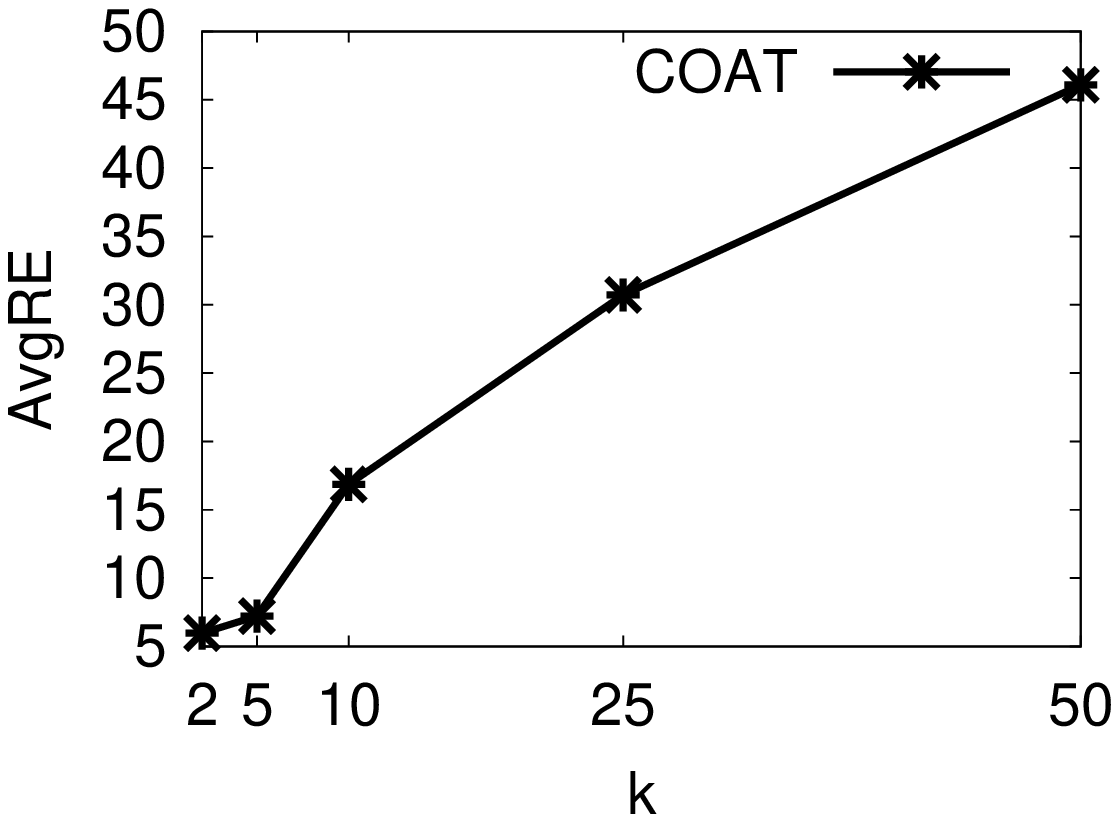}\label{case2}
}
\vspace{-3mm}
\caption{AvgRE vs. $k$ for \emph{EMR-D} computed using (a) \emph{W1}  and (b) \emph{W2}}
\end{figure}

Fig. \ref{case1} reports the \emph{AvgRE} scores for \emph{EMR}, where $k$ was selected over the range $[2,50]$, and \emph{W1} was used. As can be seen,
the \emph{AvgRE} scores indicate that anonymized data permits queries that are common in biomedical data analysis tasks to be answered fairly accurately, even when a strict privacy policy is adopted. The corresponding result for \emph{W2} is reported in Fig. \ref{case2}. Again, the \emph{AvgRE} scores confirm that a low level of information loss was incurred to anonymize \emph{EMR}, particularly when $k$ is $5$ or lower as it is commonly the case when publishing biomedical data \cite{ElEmam}.

In summary, our case study confirms the effectiveness of our anonymization framework when there are specific utility requirements.
This is because it allows the \emph{EMR} dataset to be published in a way that
prevents linking attacks with respect to any portion of any transaction of this dataset, helps
biomedical studies focusing on specific disorders, and allows accurate data analysis.

\section{Conclusions and Future Work}\label{conclusion}

Existing approaches for anonymizing transactional data often produce excessively
distorted data that is of limited utility, due to the fact that they
incorporate coarse privacy requirements, are agnostic with respect to data
utility requirements, and search a fraction of the solution
space. In response, we developed a novel approach that overcomes these limitations by allowing fine-grained privacy and utility requirements to
be specified as constraints, and COAT (COnstraint-based Anonymization of Transactions), an algorithm that transforms data using item generalization and
suppression to satisfy the specified constraints, while minimally distorting data. We also demonstrated the effectiveness of our approach using extensive experiments
on benchmark datasets and a case study on patient-specific data containing diagnosis codes.
Our results demonstrate that COAT is able to satisfy a wide range of privacy and utility requirements with
less information loss than the state-of-the art method, and to anonymize data in a way that prevents identity disclosure and retains data utility for intended applications.

This work also opens up several directions for future investigation.
First, although experimentally shown to be both effective and efficient in practice, the COAT algorithm is heuristic in nature, and, as such, it does not guarantee generating optimal anonymizations in terms of minimum information loss. To address the growing size of datasets and domains, we intend to develop approximation algorithms that can offer such guarantees. Second, we aim at extending our framework to deal with the problem of attribute disclosure based on the $l$-diversity privacy principle \cite{2}.

\bibliographystyle{plain}
\bibliography{bibl}
\end{document}